\definecolor{darkred}{rgb}{0.75,0,0}
\newcommand{\eps}{\varepsilon}
\newtheorem{theorem}{Theorem}
\newtheorem{lemma}{Lemma}
\newtheorem{definition}{Definition}
\newcommand{\pd}{\mathbf{p}}
\newcommand{\med}{\text{mid}}
\def\polylog{\operatorname{polylog}}
\def\eps{{\epsilon}}
\newcommand{\ignore}[1]{}
\newcommand{\eat}[1]{}
\newcommand{\squishlist}{
 \begin{list}{$\bullet$}
  { \setlength{\itemsep}{0pt}
     \setlength{\parsep}{3pt}
     \setlength{\topsep}{3pt}
     \setlength{\partopsep}{0pt}
     \setlength{\leftmargin}{1.5em}
     \setlength{\labelwidth}{1em}
     \setlength{\labelsep}{0.5em} } }
\newcommand{\squishend}{
  \end{list}  }
\def\eps{{\epsilon}}
\newcommand{\anis}[1]{{\color{blue}\bf [Anisur: #1]}}
\begin{document}

\newpage  

\title{Local Mixing Time: \\ Distributed Computation and Applications}
%\titlerunning{Local Mixing Time: Distributed Computation and Applications}

\author{Anisur Rahaman Molla\thanks{Supported by DST Inspire Faculty research grant DST/INSPIRE/04/2015/002801.} \\
	{School of Computer Sciences}\\
	{NISER Bhubaneswar}\\
	{Odisha 752050, India}\\
	{\texttt{anisurpm@gmail.com}}
\and 
Gopal Pandurangan \thanks{Supported, in part, by NSF grants CCF-1527867, CCF-1540512,  IIS-1633720, CCF-1717075 and
BSF award 2016419.}\\
	{Department of Computer Science}\\
	{University of Houston}\\
	{Houston, Texas 77204, USA}\\
	 {\texttt{gopalpandurangan@gmail.com}}
}

\date{}

%
\iffalse

\author{Anisur Rahaman Molla\thanks{School of Computer Sciences, NISER Bhubaneswar, Odisha 752050, India.
  \texttt{anisurpm@gmail.com.} Supported in part by DST Inspire Faculty research grant DST/INSPIRE/04/2015/002801.}  \,\,and Gopal Pandurangan\thanks{Department of Computer Science, University of Houston, Houston, Texas 77204, USA,
  \texttt{gopalpandurangan@gmail.com}. Supported, in part, by NSF grants CCF-1527867, CCF-1540512,  IIS-1633720, CCF-1717075 and
BSF award 2016419.}}

\fi

\maketitle
%\thispagestyle{empty}

%\vspace{-0.2in}

\begin{abstract} 
The mixing time of a graph is an important metric, which is not only useful in analyzing connectivity and expansion properties of the network, but also serves as a key parameter in designing efficient algorithms. We introduce a new notion of mixing of a random walk on a (undirected) graph, called {\em local mixing}. Informally, the local mixing with respect to a
given node $s$, is the mixing of a random walk probability distribution restricted to a large enough subset of nodes --- say, a subset of  size at least $n/\beta$ for a given parameter $\beta$ --- containing $s$. The time to mix over such a  subset  by a random walk starting from a source node $s$ is called the {\em local mixing time} with respect to $s$.  The local mixing time captures the local connectivity and expansion properties around a given source node and  is  a useful parameter  that determines the running time of algorithms for partial information spreading, gossip etc.   

Our first contribution is formally defining the notion of {\em local mixing time} in an undirected  graph. We then present an efficient distributed algorithm which computes a constant factor approximation to the local mixing time with respect to  a source node $s$ in $\tilde{O}(\tau_s)$ rounds\footnote{The notation
$\tilde{O}$ hides a $O(\polylog n )$ factor.}, where $\tau_s$ is the local mixing time w.r.t $s$ in an $n$-node regular graph. This bound holds when $\tau_s$ is significantly smaller than the  
conductance of the local mixing set (i.e., the set where the walk mixes locally); this is typically the interesting case where the local mixing time
is significantly smaller than the mixing time (with respect to  $s$).  We also present a distributed algorithm that computes the {\em exact} local mixing time in $\tilde{O}(\tau_s \mathcal{D})$ rounds, where $\mathcal{D} =\min\{\tau_s, D\}$ and $D$ is the diameter of the graph (this bound holds unconditionally without any assumptions on $\tau_s$).
Our algorithms work in the {\em CONGEST} model of distributed computing. Since the local mixing time can be significantly smaller than
the mixing time  (or even the diameter) in many graphs, it serves as a tighter measure of  distributed complexity in certain algorithmic applications. In particular, we show that local mixing time tightly characterizes the complexity of partial information spreading which in turn is useful in solving other problems such as the maximum coverage problem, full information spreading, leader election etc. 

%Further, we show an important relation with the {\em weak-conductance} of an undirected graph (defined in \cite{censor-podc10,censor-soda11}), which in turn applies to solve some fundamental problems in distributed computing. In particular, our local mixing time measures the distributed complexity of partial information spreading which in turns applies to solve maximum coverage problem \cite{censor-podc10}. other application? -- full information spreading, leader election. 
 
%Our algorithm works in the {\em CONGEST} model of distributed computing. Hence our algorithms are scalable under bandwidth constraints and can be helpful building blocks in the design of topologically aware networks.     
\end{abstract}
\vspace{.2in}
\noindent {\bf Keywords:} distributed algorithm, random walk, mixing time, conductance, weak-conductance, information spreading

\newpage
%\vspace{-0.05in}
\section{Introduction}\label{sec:intro}
%\vspace{-0.05in}
%\anis{Introduction is copied from the ICDCN- mixing paper. We  can use some parts from there.}
% !TEX root = root_arxiv.tex

%\section{Introduction}
Mixing time of a random walk in a graph is the time taken  by a  random walk to converge to the {\em stationary distribution} of the underlying graph.  It is an important parameter which is closely related to various key graph properties such as graph expansion, spectral gap, conductance etc. Mixing time (denoted by $\tau^{mix}$) is related to the {\em conductance} $\Phi$ and {\em spectral gap} ($1-\lambda_2$) of a $n$-node graph due to the known relations (\cite{JS89}) that $\frac{1}{1-\lambda_2}\leq \tau^{mix} \leq \frac{\log n}{1-\lambda_2}$ and $\Theta(1-\lambda_2)\leq \Phi \leq \Theta(\sqrt{1-\lambda_2})$, where  $\lambda_2$ is the second largest eigenvalue of the adjacency matrix of the graph.   Small mixing time means the graph has  high expansion and spectral gap. Such a network supports fast random sampling (which has many applications \cite{drw-jacm}) and low-congestion routing \cite{mihail}. Moreover, the spectral properties tell a great deal about the network structure \cite{DasSarmaMPU15}. Mixing time is also useful in designing efficient randomized algorithms in communication networks \cite{storage-spaa13,APR-podc13,SarmaMP15,DasSarmaMPU15,KM15,sirocco14}.

There has been some previous work on distributed algorithms to compute mixing time.
The work of Kempe and McSherry \cite{kempe}  estimates the mixing time $\tau^{mix}$ in $O(\tau^{mix} \log^2 n)$ rounds. Their approach uses {\em Orthogonal Iteration} i.e., heavy matrix-vector multiplication process, where each node needs to perform complex calculations and do memory-intensive  computations. This may not be suitable in
a lightweight environment.
 It is mentioned in their paper that it would be interesting whether a simpler and direct approach based on eigenvalues/eigenvectors can be used to compute mixing time.  Das Sarma et al. \cite{drw-jacm} presented a distributed algorithm based on sampling nodes by performing sub-linear time random walks and then comparing the distribution with stationary distribution.  The work of Molla and Pandurangan \cite{icdcn17} presented  an efficient and simple distributed algorithm for computing the mixing time of  undirected graphs. Their algorithm  estimates the mixing time $\tau^{mix}_s$ (with respect to a source node $s$) of any $n$-node undirected graph in $O(\tau^{mix}_s \log n)$ rounds  and achieves high accuracy of estimation.
This algorithm is based on random walks and require very little memory and use lightweight local computations, and works in the {\em CONGEST} model. The algorithm of Das Sarma et al.  can be sometimes faster than the algorithm of Molla and Pandurangan, however, there is a grey area (in the comparison between the two distributions) for which the former algorithm fails to estimate the mixing time with any good accuracy (captured
 by the accuracy parameter $\eps$ defined in
 Section \ref{sec:rwalk}). The latter algorithm  is sometimes faster (when the mixing time is $o(\sqrt{n})$) and estimates the mixing time  with high accuracy \cite{icdcn17}. 
 %\anis{we should write more detail comparison. }

In this paper,
we introduce a new notion of mixing (time) of a random walk on a (undirected) graph, called {\em local mixing (time)}. 
%Informally, the local mixing with respect to a
%given node $s$, is the mixing of a random walk probability distribution restricted to a large enough subset of nodes containing $s$. 
%We focus on developing simple and efficient distributed algorithms for computing mixing time and {\em local mixing time}  in graphs. 
Local mixing time  (precisely defined in Definition \ref{def:loc-mix-time})   captures the local connectivity and expansion properties around a given source node and  is  a useful parameter  that determines the run time of algorithms for information spreading, gossip etc. Informally,   {\em local mixing time}   is the time
for a token from any node $s$ to reach (essentially) the stationary distribution of a large enough subset $S$ of nodes (say of size at least $n/\beta$, for a given parameter $\beta$) containing $s$  (here, stationary distribution is computed with respect to that subset).
(It is important to note that the set $S$ is not known a priori, it just needs to exist.)  Local mixing time is a finer notion than mixing time and is always
upper bounded by mixing time (trivially), but can be significantly smaller than the mixing time (and even the diameter) in many graphs (cf. Section \ref{sec:graph-examples}).
For example, the mixing time of a $\beta$-barbell graph (cf. Section \ref{sec:graph-examples}) is $\Omega(n)$ (and its diameter is $O(\beta)$), whereas its local mixing time is $O(1)$; hence partial information spreading (cf. Section \ref{sec:app}) is significantly faster in such graphs. 

Our main contribution is an efficient distributed algorithm for computing the local mixing time in undirected regular graphs. We show that we can compute a constant factor approximation (for any small fixed positive constant)\footnote{We actually compute
a 2-factor approximation, but it can be easily modified to compute any $(1+\delta)$-factor approximation, for any constant $\delta > 0$.} of
local mixing time in $O(\tau_s \log^2 n \log_{(1+\eps)} \beta)$ rounds in undirected  graphs, where $\tau_s$ is the local mixing time\footnote{Please see Section \ref{sec:loc-mix-def} for the formal definition and notation; we formally denote the local mixing time by  $\tau_s(\beta, \eps)$ parameterized
by $\beta$ (which determines the size of the set where the walk locally mixes) and by $\epsilon$, an accuracy parameter (which measures how close the walk mixes).} with respect
to $s$. This bound holds when $O(\tau_s)$ is significantly smaller than the  
conductance of the local mixing set (i.e., the set where the walk mixes locally); this is typically the interesting case where the local mixing time
is significantly smaller than the mixing time of $s$.  We also present a distributed algorithm that computes the {\em exact} local mixing time in $O(\tau_s \mathcal{D} \log n \log_{(1+\eps)} \beta)$ rounds, where $\mathcal{D} = \min\{\tau_s, D\}$. This bound holds unconditionally without any assumptions on $\tau_s$. The local mixing time of the graph is the maximum 
of the local mixing times with respect to every node in the graph. We note that one can compute the local mixing time
with respect to the entire graph by taking the maximum of all the local mixing times starting from each vertex. This
(in general) will incur an $O(n)$-factor additional overhead on the number of rounds (by running the distributed algorithm with respect to every node). However, depending on the input graph, one may be able to compute (or approximate) it significantly faster  by sampling only a few source nodes and running it only from those source nodes (e.g., in a graph where the local mixing times are more or less the same with respect to any node).

Our definition of local mixing time is inspired by the notion of {\em weak conductance} \cite{censor-podc10} that similarly tries
to capture the conductance around a given source vertex. It was shown in \cite{censor-podc10} that
weak conductance captures the performance of  {\em partial information spreading}. In partial information spreading, given a $n$-node graph with each node having  a (distinct) message, the goal is to disseminate  each  message to a fraction of the total number of nodes --- say $n/c$, for some $c>1$ --- and to ensure that each node receives at least $n/c$ messages.  It was shown
that graphs which have {\em large} weak conductance (say a constant) admit efficient information spreading, despite having
a poor (small) conductance \cite{censor-podc10}; hence weak conductance better captures the performance of partial information spreading.
While it is not clear how to compute weak conductance efficiently, we show that local mixing time also captures
partial information spreading. In Section \ref{sec:app}, we show that the well-studied ``push-pull" mechanism   achieves partial information spreading
in $O(\tau \log n)$ rounds, where $\tau$ is local mixing time
with respect to the entire graph, i.e., $\tau = \max_{s\in V} \tau_s$. As shown in \cite{censor-podc10}, an application of partial information spreading
is to the {\em maximum coverage problem} which naturally arises in circuit layout, job scheduling and facility location, as well as in distributed resource allocation with a global budget constraint.

%Similarly one can argue that for  the problem of partial rumor spreading (or gossip) where one wants to disseminate one message from a
%given source node to a large number of nodes. It is known that rumor spreading can be accomplished in
%$O(\tau \log n)$ rounds (with high probability)  (from any given source node) \cite{shah1,shah}.
%One way to show this is via random walks.\footnote{There are $\log n$ phases, and in each phase, every node
%with the rumor starts a random walk for $\tau$ rounds.  We can show that the number of nodes holding the rumor
%increases by a constant factor (with high probability) in every phase.}
 % Analogously, one can show that partial rumor spreading can be accomplished in $O(\tau^{loc} \log n)$ rounds, where $\tau^{loc}$ is local mixing time
%with respect to the entire graph, i.e., $\tau^{loc} = \max_{s\in V} \tau^{loc}_s$ (cf. Section \ref{sec:}).

%\anis{Define partial information spreading here and formally somewhere else}

%We give an efficient algorithm that computes the local mixing time with respect to a given source node.

 Our algorithms work in CONGEST model of distributed computation where only small-sized messages ($O(\log n)$-sized messages) are allowed in every communication round between nodes. Moreover, our algorithms are simple, lightweight 
 (low-cost  computations within a node) and easy to implement.  
 %Our algorithms work in the {\em CONGEST} model of distributed computing.
We note that our bounds are non-trivial in the CONGEST model.\footnote{In the LOCAL model, all problems can be trivially solved in $O(D)$ rounds by collecting all the topological information at one node, whereas in the CONGEST model, the same will take $O(m)$ rounds, where $m$ is the number of edges in the graph.} In particular, we point out that one cannot obtain
these bounds by simply extending the algorithm of \cite{icdcn17} that computes the mixing time $\tau^{mix}_s$ (with respect to a source node $s$) of any $n$-node undirected graph in $O(\tau^{mix}_s \log n)$ rounds.  Informally, the main difficulty in computing
(or estimating) the local mixing time is that one does not (a priori) know the set where the walk locally mixes (there can
be exponential number of such sets). This calls for a more sophisticated approach, yet we obtain a bound that is comparable to
the bound obtained for computing the mixing time obtained in \cite{icdcn17}.
 
 \iffalse
 \anis{Not the fractional approach...remove this paragraph to save some space!}Our approach crucially uses random walks. Random walks
are very  local and lightweight and require little index or state maintenance 
that makes it attractive to self-organizing networks \cite{BBSB04,ZS06}. Our approach, on a high-level, is based on efficiently performing many random walks from a particular node and computing the fraction of random walks that terminate over each node. We show that this fraction estimates the random walk probability distribution. Our approach achieves very high accuracy which is a requirement in some applications \cite{DasSarmaGP09,SarmaMP15,KM15,SpielmanT04}.
%We also present an alternate approach to computing mixing time for node transitive graphs which includes the class of symmetric graphs. The running time of the algorithm is $\tilde{O}(\frac{1}{\alpha}\sqrt{n \tau D} + n/\alpha^2)$, where $D$ is the diameter of the graph and $\alpha$ is an accuracy constant. 
\fi

%\vspace{-0.05in}
\subsection{Distributed Network Model}
%The communication graph is $G= (V, E)$. For any node $u$, $d(u)$ and $N(u)$ denote the degree of $u$ and the set of neighbors of $v$ in $G$ respectively. Further $D$ denotes the diameter of the graph. \anis{explain briefly the communication model.}
%\vspace{-0.05in}
We model the communication network as an undirected, unweighted, connected graph $G = (V, E)$, where $|V| = n$ and $|E| = m$. Every  node has limited initial knowledge. Specifically, we assume that each node is associated with a distinct identity number  (e.g., its IP address). 
At the beginning of the computation, each node $v$ accepts as input its own identity number and the identity numbers of its neighbors in $G$.
% The node may also accept some additional inputs as specified by the problem at hand. Here, 
We also assume that the number of nodes and edges i.e., $n$ and $m$ (respectively) are given as inputs. (In any case, nodes can compute them easily through broadcast in $O(D)$, where $D$ is the network diameter.) The nodes are only allowed to communicate through the edges of the graph $G$. We assume that the communication occurs in  synchronous  {\em rounds}. 
%In particular, all the nodes wake up simultaneously at the beginning of round 1, and from this point on the nodes always know the number of the current round. 
We will use only small-sized messages. In particular, in each round, each node $v$ is allowed to send a message of size $O(\log n)$ bits through each edge $e = (v, u)$ that is adjacent to $v$.  The message  will arrive to $u$ at the end of the current round. 
%This is a standard model of distributed computation known as the {\em CONGEST model} \cite{peleg} and has been attracting a lot of research attention during last two decades (e.g., see \cite{peleg} and the references therein).
This is a  widely used  standard model known as the {\em CONGEST model} to study distributed algorithms (e.g., see \cite{peleg,gopal-book}) and captures the bandwidth constraints inherent in real-world computer  networks. 
%Our algorithms can be easily generalized if $B$ bits  are allowed (for any pre-specified parameter $B$) to be sent through each edge in a round. Typically, as assumed here, $B = O(\log n)$, which is number of bits needed to send a node id in a n-node network. 

We  focus on minimizing the  {\em the running time}, i.e., the number of {\em rounds} of distributed communication. Note that the computation that is performed by the nodes locally is ``free'', i.e., it does not affect the number of rounds; however, we will only perform polynomial cost computation locally (in particular, very simple computations) at any node. 
%We note that in the CONGEST model, it is rather trivial to solve a problem in $O(m)$ rounds, where $m$ is the number of edges in the network, since the entire topology (all the edges) can be collected at one node and the problem solved locally. The goal is to design faster algorithms.

 For any node $u$, $d(u)$ and $N(u)$ denote the degree of $u$ and the set of neighbors of $v$ in $G$ respectively. 
% \vspace{-0.05in}
\subsection{Related Work}
\label{sec:related}
We briefly discuss prior work that relates to the related problem of computing the mixing time of a graph. It is important
to note that these algorithms do not give (or cannot be easily adapted) to give efficient algorithms for computing the local
mixing time.

Das Sarma et al. \cite{drw-jacm} presented a fast decentralized algorithm for estimating mixing time, conductance and spectral gap of the network. In
particular, they show that given a starting node $s$, the mixing time with respect to $s$, i.e, $\tau^{mix}_s$, can be
estimated in $\tilde{O}(n^{1/2} + n^{1/4}\sqrt{D\tau^{mix}_s})$ rounds. This gives an alternative algorithm to the only previously known
approach by Kempe and McSherry \cite{kempe} that can be used to estimate
$\tau^{mix}_s$ in $\tilde{O}(\tau^{mix}_s)$ rounds. 
 In fact, the work of \cite{kempe} does more and gives a decentralized algorithm for
computing the top $k$ eigenvectors of a weighted adjacency matrix
that runs in $O(\tau^{mix}\log^2 n)$ rounds if two adjacent nodes are allowed to exchange $O(k^3)$ messages per round, where $\tau^{mix}$ is
the mixing time and $n$ is the size of the network.  

Molla and Pandurangan \cite{icdcn17}  presented an  algorithm that estimates the mixing time $\tau^{mix}_s$ for the source node in $O(\tau^{mix}_s \log n)$ rounds in a undirected graph and achieves high accuracy of estimation. 
This algorithm is based on random walks. Their approach, on a high-level, is based on efficiently performing many random walks from a particular node and computing the fraction of random walks that terminate over each node. They show that this fraction estimates the random walk probability distribution. This approach achieves very high accuracy which is a requirement in some applications \cite{DasSarmaGP09,SarmaMP15,KM15,SpielmanT04}. As mentioned earlier, this approach does not extend
to computing the local mixing time efficiently.

%\anis{this para should be rephrased and with detail information from the paper [8]} 
The algorithm of Das Sarma et al. 
\cite{drw-jacm} is based on sampling nodes by performing sub-linear time random walks of certain length and comparing the distribution with the stationary distribution. In particular, if $\tau^{mix}$ is smaller than $\max \{\sqrt{n}, n^{1/4} \sqrt{D}\}$, then
 the algorithm of Molla and Pandurangan is faster.  Also there is a grey area for the accuracy parameter $\eps$ for which the algorithm of Das Sarma et al. cannot estimate the mixing time. 
 %They use a testing result from Batu et al. \cite{BFFKRW} to determine if the two distributions are close enough. This test may fail if the difference between the two distributions falls in a certain interval. 
 More precisely, the algorithm
of Das Sarma et al. estimates the mixing time for accuracy parameter $\eps = 1/(2e)$ with respect to a source node $s$, $\tau^{mix}_s(1/2e)$  as follows: the estimated value will be between the true value  and $\tau^{mix}_s(O(1/(\sqrt{n}\log n)))$.

% In contrast, our algorithm estimates
%$\tau_v(\eps)$, for even small values of $\eps$, say $\eps = 1/n^2$: the estimated value will be
%between the true value and $\tau_v(1/n^2)$. 

%Random walks have been used in a variety of distributed network applications, we refer to \cite{storage-spaa13,APR-podc13,SarmaMP15,DasSarmaMPU15,drw-jacm,kempe,KM15,sirocco14} and the references therein for more details.

The notion of weak conductance was defined in the work of Censor-Hillel and Sachnai \cite{censor-podc10} which they then use
as a parameter to capture partial information spreading.  They also showed that partial information spreading is useful in solving several other important problems, e.g., maximum coverage, full information spreading, leader election  etc. \cite{censor-podc10, censor-soda11}.

There are some notions proposed in the literature that are alternative to the standard notion of mixing time and stationary distribution. 
These notions are different from the notion of local mixing time  studied in this paper.
The work of \cite{AFPP-soda12} introduces the concept of ``metastable" distribution and pseudo-mixing time of Markov chains.
Informally, a distribution $\mu$ is $(\epsilon, T)$- metastable for a Markov chain if, starting from $\mu$, the Markov chain stays at distance at most $\epsilon$ from $\mu$ for at least $T$ steps. The pseudo-mixing time of $\mu$ starting from a state $x$ is the number of steps needed by the Markov chain to get $\epsilon$-close to $\mu$ when started from $x$. Another notion that has been studied in literature  is ``quasi-stationarity", which  has been used to model the long-term behaviour of stochastic systems that appear to be stationary over a reasonable time period, see, e.g., \cite{DM15} for more details.

%\vspace{-0.05in}
\section{Local Mixing}\label{sec:rwalk}
%\vspace{-0.05in}
We define the notion of {\em local mixing} and {\em local mixing time}. Before we do that, we first recall some preliminaries
on random walks.

%\vspace{-0.05in}
\subsection{Random Walk Preliminaries}
%\vspace{-0.05in}
Given an undirected graph $G$ and a starting point, a {\em simple random walk} is defined as: in each step, the walk goes from the current node to a random neighbor i.e., from the current node $u$, the probability of moving to node $v$ is $\Pr(u, v) = 1/d(u)$ if $(v,u) \in E$, otherwise $\Pr(u, v) = 0$, where $d(u)$ is the degree of $u$.

Suppose a random walk starts at vertex $s$. Let $\pd_0 (s)$ be the initial distribution with probability $1$ at the node $s$ and zero at all other nodes. Then the probability distribution $\pd_t (s)$ at time $t$ starting from the initial distribution $\pd_0 (s)$ can be seen as the matrix-vector multiplication $A^t\pd_0(s)$, where $A$ is the transpose of the transition probability matrix of $G$. We denote the probability distribution vector at time $t$ by the bold letter $\pd_t (s)$ and the probability of a co-ordinate i.e., probability at a node $v$ by $p_t(s, v)$. Sometime we omit the source node from the notations, when it is clear from the text--- so the notations would be $\pd_t$ and $p_t(v)$ respectively. The stationary distribution (a.k.a steady-state distribution) is the distribution $\pd_r$ such that $A\pd_r = \pd_r$ i.e.,  the distribution doesn't change (it has converged). The stationary distribution of an undirected connected graph is a well-defined quantity which is $\bigl(\frac{d(v_1)}{2m}, \frac{d(v_2)}{2m}, \ldots, \frac{d(v_n)}{2m}\bigr)$, where $d(v_i)$ is the degree of node $v_i$.  %It is known that for every undirected connected graph $G$, the distribution $\pmb{\pi}(v) = d(v)/2m$ is stationary. 
We denote the stationary distribution vector by $\pmb{\pi}$, i.e., $\pi(v) = d(v)/2m$ for each node $v$. The stationary distribution of a graph is fixed irrespective of the starting node of a random walk, however, the number of steps (i.e., time) to reach to the stationary distribution could be different for different starting nodes. The time to reach to the stationary distribution is called the {\em mixing time} of a random walk with respect to the source node $s$. The mixing time corresponding to the source node $s$ is denoted by $\tau^{mix}_s$. The mixing time of the graph, denoted by $\tau^{mix}$, is the maximum mixing time among all (starting) nodes in the graph.  Mixing time exists and is well-defined for non-bipartite graphs; throughout we assume non-bipartite graphs.\footnote{Bipartiteness or not is rather a technical issue, since  if we consider a lazy random walk (i.e.,
random walk where at each step, with probability $1/2$ the walk stays in the same node and with probability $1/2$, it goes to a random neighbor), then it is well-defined for all graphs.}   The formal definitions are given below. 

%The {\em mixing time} of a random walk starting from source node $s$, denoted by $\tau_s$, is the  time (or number of steps) taken to reach to the stationary distribution of the graph.  The mixing time, denoted by $\tau$, is the maximum mixing time among all (starting) nodes in the graph.

\begin{definition}\label{def:mixing-time} ($\tau^{mix}_s(\eps)$--mixing time for source $s$ and $\tau^{mix}(\eps)$--mixing time of the graph)\\
Define $\tau^{mix}_s (\eps)= \min \{t : ||\pd_t - \pmb{\pi}||_1 < \eps\}$, where $||\cdot||_1$ is the $L_1$ norm. Then $\tau^{mix}_s(\eps)$ is called the $\eps$-near mixing time for any $\eps$ in $(0, 1)$. The mixing time of the graph is denoted by $\tau^{mix} (\eps)$ and is defined by $\tau^{mix}(\eps) = \max \{\tau^{mix}_v(\eps): v \in V\}$. It is clear that $\tau^{mix}_s (\eps) \leq \tau^{mix} (\eps)$. \hfill $\square$
\end{definition}

%While a standard definition of mixing time usually takes $\eps$ to be $1/2e$ \cite{Levin}, we assume $\eps = o(1)$ throughout the paper.  
We sometime omit $\eps$ from the notations when it is understood from the context. The definition of $\tau^{mix}_s$ is consistent due to the following standard monotonicity property of distributions. We note that a similar monotonicity property
does not hold for $\tau_s$, the local mixing time with respect to source node $s$; this is one reason why computing local mixing time is more non-trivial compared to mixing time.

% \anis{@Gopal: do we need to keep the following monotonicity property for the mixing time? I think we can remove it.}
\begin{lemma}\label{lem:monotonicity}
$||\pd_{t+1} - \pmb{\pi}||_1 \leq  ||\pd_t - \pmb{\pi}||_1$
\end{lemma}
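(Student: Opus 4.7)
The plan is to invoke the standard fact that a stochastic transition operator is a non-expansive map on the $L_1$ norm and that the stationary distribution is a fixed point of this map. Concretely, since $\pmb{\pi}$ is stationary we have $A\pmb{\pi} = \pmb{\pi}$, and by definition $\pd_{t+1} = A\pd_t$. Subtracting gives
\[
\pd_{t+1} - \pmb{\pi} = A\pd_t - A\pmb{\pi} = A(\pd_t - \pmb{\pi}),
\]
so the claim reduces to showing $\|A\vect{x}\|_1 \le \|\vect{x}\|_1$ for the specific vector $\vect{x} := \pd_t - \pmb{\pi}$ (in fact this holds for every real vector, but we only need it here).

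Next I would unfold the definition of the $L_1$ norm and apply the triangle inequality coordinate by coordinate. Because $A$ is the transpose of the transition matrix of a simple random walk, the column sums of $A$ equal $1$: for every node $u$, $\sum_v A_{vu} = \sum_{v \in N(u)} 1/d(u) = 1$. Hence
\[
\|A\vect{x}\|_1 \;=\; \sum_v \Bigl|\sum_u A_{vu}\, x_u\Bigr| \;\le\; \sum_v \sum_u A_{vu}\, |x_u| \;=\; \sum_u |x_u| \sum_v A_{vu} \;=\; \sum_u |x_u| \;=\; \|\vect{x}\|_1,
\]
where the inequality is the triangle inequality (valid since $A_{vu} \ge 0$) and the swap of summation order is Fubini for finite sums.

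Combining the two displays gives $\|\pd_{t+1} - \pmb{\pi}\|_1 = \|A(\pd_t - \pmb{\pi})\|_1 \le \|\pd_t - \pmb{\pi}\|_1$, which is exactly the claim. There is no real obstacle here: the argument only uses (i) column-stochasticity of $A$ (immediate from the random-walk transition rule on an undirected graph) and (ii) that $\pmb{\pi}$ is a fixed point of $A$ (a standard fact, and verified by the direct calculation $(A\pmb{\pi})(v) = \sum_{u \in N(v)} \frac{1}{d(u)} \cdot \frac{d(u)}{2m} = \frac{d(v)}{2m} = \pi(v)$). The only point worth flagging is that the authors explicitly note, in the paragraph preceding the lemma, that the analogous monotonicity \emph{fails} for the local mixing time $\tau_s$; so it is important that the argument here relies on $\pmb{\pi}$ being globally stationary and not merely stationary on some subset, which is precisely what the calculation $A\pmb{\pi}=\pmb{\pi}$ uses.
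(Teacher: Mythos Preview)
Your proof is correct and follows essentially the same approach as the paper: both use that $A\pmb{\pi}=\pmb{\pi}$ to write $\pd_{t+1}-\pmb{\pi}=A(\pd_t-\pmb{\pi})$ and then invoke the $L_1$ non-expansiveness $\|A\vect{x}\|_1\le\|\vect{x}\|_1$, which follows from column-stochasticity of $A$. You actually supply more detail than the paper (the explicit triangle-inequality computation and the verification of $A\pmb{\pi}=\pmb{\pi}$), but the core argument is identical.
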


\begin{proof}(adapted from Exercise 4.3 in \cite{Levin})
The monotonicity follows from the fact that $||A\mathbf{x}||_1 \le ||\mathbf{x}||_1$, where $A$ is the transpose of the transition probability matrix of the graph and $\mathbf{x}$ is any $n\times 1$ vector. That is, $A(i,j)$ denotes the probability of transitioning from the node $j$ to the node $i$. This in turn follows from the fact that the sum of entries of any column of $A$ is 1.

We know that $\pmb{\pi}$ is the stationary distribution of the transition matrix $A$. This implies that if $\ell$ is $\eps$-near mixing time, then $||A^{\ell}\pd_0 - \pmb{\pi}||_1 \leq \eps$, by definition of $\eps$-near mixing time and $\pd_{\ell} = A^{\ell}\pd_0$. Now consider $||A^{\ell+1}\pd_0 - \pmb{\pi}||_1$. This is equal to $||A^{\ell+1}\pd_0 - A\pmb{\pi}||_1$, since $A\pmb{\pi} = \pmb{\pi}$.  However, this reduces to $||A(A^{\ell}\pd_0 - \pmb{\pi})||_1 \leq ||A^{\ell}\pd_0 - \pmb{\pi}||_1 \leq \eps$, (from the fact  $||A\pd||_1 \le ||\pd||_1$). Hence, it follows that $(\ell+1)$ is also $\eps$-near mixing time.
\end{proof}

\subsection{Definition of Local Mixing and  Local Mixing Time}\label{sec:loc-mix-def} 

For any set $S\subseteq V$, we define $\mu(S)$ is the volume of $S$ i.e., $\mu(S) = \sum_{v \in S} d(v)$. Therefore, $\mu(V) = 2m$ is the volume of the vertex set. The {\em conductance} of the set $S$ is denoted by $\phi(S)$ and defined by 
$$\phi(S) =  \frac{|E(S, V\setminus S)|}{\min\{\mu(S),\, \mu(V\setminus S)\}},$$
where $E(S, V\setminus S)$ is the set of edges between $S$ and $V\setminus S$. 

Let us define a vector $\pmb{\pi}_{S}$ over the set of vertices $S$ as follows: 

\[ \pi_{S}(v) =
  \begin{cases}
    d(v)/\mu(S)       & \quad \text{if } v \in S\\
    0  & \quad \text{otherwise} \\
  \end{cases}
\]

Notice that $\pmb{\pi}_V$ is the stationary distribution $\pmb{\pi}$ of a random walk over the graph $G$, and $\pmb{\pi}_S$ is the restriction of the distribution on the subgraph induced by the set $S$. Recall that we defined $\pd_t$ as the probability distribution over $V$ of a random walk of length $t$, starting from some source vertex $s$. Let us denote the restriction of the distribution $\pd_t$ over a subset $S$ by $\pd_t {\restriction_S}$ and define it as: 

\[ p_t {\restriction_S} (v) =
  \begin{cases}
    p_t(v)       & \quad \text{if } v \in S\\
    0  & \quad \text{otherwise} \\
  \end{cases}
\] 
It is clear that $p_t {\restriction_S}$ is not a probability distribution over the set $S$ as the sum could be less than $1$. 
%We define the notion of  {\em local mixing time} with respect to a source vertex as follows. 

Informally, {\em local mixing}, with respect to a source node $s$, means that there exists  some (large-enough) subset of nodes $S$ containing $s$ such that the random walk probability distribution becomes close to the stationary distribution restricted to $S$ (as defined above) quickly. We would like to quantify how fast the walk mixes locally around a source vertex. This is called as {\em local mixing time} which is formally defined below.    

\begin{definition}\label{def:loc-mix-time} (Local Mixing and Local Mixing Time) \\
Consider a vertex $s \in V$. Let $\beta \geq 1$ be a positive constant and $\epsilon \in (0,1)$ be a fixed parameter. We  first
 define the notion of local mixing in a set $S$.  Let $S  \subseteq V$ be a {\em fixed} subset containing $s$ of size at least $n/\beta$. Let $\pd_t{\restriction_S}$ be the restricted probability distribution over $S$ after $t$ steps of a random walk starting from $s$ and  $\pmb{\pi}_S$ be as defined above.  Define the {\em  mixing time  with respect to set $S$}  as  $\tau^S_{s}(\beta, \eps) = \min \{t: ||\pd_t{\restriction_S} - \pmb{\pi}_S ||_1 < \eps \}$. We say that the random walk {\em locally mixes} in $S$ if  $\tau^S_{s}(\beta, \eps)$ exists and well-defined. (Note that a walk may not locally mix in a given set $S$, i.e., there exists no time $t$ such that  $||\pd_t{\restriction_S} - \pmb{\pi}_S ||_1 < \eps$; in this case we can take the local mixing time to be $\infty$.)

The local mixing time with respect to source node $s$ is defined as
 $\tau_{s}(\beta, \eps) = \min_{S}\tau^S_{s}(\beta, \eps)$, where the minimum is taken over all subsets $S$ (containing $s$) of size
at least $n/\beta$, where the random walk starting from $s$ locally mixes. A set $S$  where the minimum  is attained (there may be more than one) is
called the {\em local mixing set}.
%We call $\tau_{s}(\beta, \eps)$ as {\em local mixing time} corresponding to the source vertex $s$ and the parameter $\beta$. 
The local mixing time of the graph, $\tau(\beta, \eps)$ (for  given parameters $\beta$ and $\epsilon$), is $\max_{v \in V} \tau_{v}(\beta, \eps)$. \hfill $\square$
\end{definition}

From the above definition, it is clear that $\tau_{s}(\beta, \eps)$ always exists (and well-defined) for every fixed $\beta \geq 1 $, since in the worst-case,
it equals the mixing time of the graph; this happens when $|S| = n \geq n/\beta$ (for every $\beta \geq 1$). We note that, crucially, in the above definition of local mixing time,  the {\em minimum} is taken over subsets $S$ of size at least $n/\beta$, and thus, in many graphs, 
local mixing time can be substantially smaller than the mixing time when $\beta > 1$ (i.e., the local mixing can happen much earlier in
some set $S$ of size $\geq n/\beta$ than
the mixing time). It is important to note that the set $S$ where the  local mixing time is attained is
not fixed  a priori, it only requires that a set $S$ of size at least $n/\beta$ exists. (Since $S$ is not known a priori, the computation of local mixing time is more complicated, unlike mixing time; in our algorithms we do not explicitly compute the local mixing set, but 
only compute an approximation of the local mixing time). 

It also follows from the definition that the local mixing time depends on the parameter $\beta$, i.e., size of subset $S$ --- in general,
smaller the size of $S$ smaller the local mixing time. In particular, if $\beta = 1$, then $\tau_{s}(1, \eps) = \tau^{mix}_{s}(\eps)$, mixing time for source $s$ (cf. Definition~\ref{def:mixing-time}) and in general, $\tau(\beta, \eps) \leq \tau^{mix}(\eps)$ for any $\beta$. 
 %Thus the local mixing time is well defined and exists, since it is bounded by the mixing time of the graph. %Therefore, in the worst case the local mixing time could be equal to the mixing time of the graph.

Intuitively, small local mixing time implicates that the random walk starting from a vertex mixes fast over a (large enough) subset (parameterized by $\beta$)  around that vertex. Therefore, given an undirected graph $G$, a source node $s$ and a parameter $\beta$, the goal is to compute the local mixing time with respect to $s$.\footnote{Similar to the case of the mixing time, one can compute the  local mixing time
with respect to the entire graph by taking the maximum of all the local mixing times starting from each vertex. This
(in general) will incur an $O(n)$-factor additional overhead on the number of rounds.} In our algorithm in Section \ref{sec:local-mixing-time}, we compute a constant factor approximation to the local mixing time (we do not explicitly compute the set where the walk locally mixes). In Section \ref{sec:general-graph-local-mixing-time}, we give an algorithm to compute the exact local mixing time.

%\vspace{-0.05in}
\subsection{Local Mixing Time and Mixing Time in Some Graphs}\label{sec:graph-examples}
%\vspace{-0.05in} 
The local mixing time $\tau_{s}(\beta, \eps)$ w.r.t. a source node $s$ (also $\tau(\beta, \eps)$) is monotonically decreasing function of $\beta$. That is if $\beta_1 \geq \beta_2$ then $\tau_{s}(\beta_1, \eps) \leq \tau_{s}(\beta_2, \eps)$ (also $\tau(\beta_1, \eps) \leq \tau(\beta_2, \eps)$). This follows directly from the definition since $n/\beta_1 \leq n/\beta_2$. 

Let us now compare the local mixing time and mixing time in some well-known graph classes. It will strengthen understanding towards why local mixing time is a refined measure of mixing time of a random walk in a graph. Consider the following graphs: 
\begin{enumerate}[(a)]
 \item {\em Complete graph:}  Both local mixing time and mixing time are constant. This is because, in one step of the random walk, the probability distribution becomes $\pd_{1} = (0, \frac{1}{n-1}, \frac{1}{n-1}, \ldots, \frac{1}{n-1})$ which is $\eps$-close to the uniform distribution (which is the stationary distribution). Thus the mixing time of the complete graph is $1$ and hence, the local mixing time is equal to the mixing time.    
 \item {\em $d$-regular expander:} It is known that the mixing time of an expander graph is $O(\log n)$ \cite{Levin}. The proof follows from the expansion property of the graph. The rate of convergence of a probability distribution to the stationary distribution is bounded by the second largest eigenvalue of the transition matrix. The second largest eigenvalue of an expander graph is constant. It can be shown that mixing in a set of size at least $n/\beta$, will take at least $O(\log_d (n/\beta)) = O(\log n)$ time (for constant $d$ and 
 $\beta$). Thus the local mixing time is $O(\log n)$. Therefore, there is no substantial difference between  mixing time and local mixing time in expander graphs.
 \item {\em Path:} It is known that mixing time of a path of $n$ nodes is $O(n^2)$ \cite{Levin}.  The local mixing time is $O(n^2/\beta^2)$, since it requires so much time to mix in a sub-path of size $n/\beta$. This can be substantially smaller than mixing time when $\beta$ is large. %That is it's a sub problem of of size $n/\beta$ and hence requires at least $O(n^2/\beta^2)$ time. 
 %(\anis{correct?}) 
 \item {\em $\beta$-barbell graph:}  This is a generalization of the {\em barbell} graph. The $\beta$-barbell graph consists of a path of $\beta$ equal sized cliques, i.e., the size of each clique is $n/\beta$ (see  Figure~\ref{fig:barbell}). The local mixing time is $1$, but it is easy to show that mixing time is $\Omega(\beta^2)$.
% because the path length is $\beta-1$ and consider each clique as one node. 
In this graph, there is a siginificant difference between mixing time and local mixing time, e.g., for $\beta = \sqrt{n}$, the difference between mixing time and local mixing time is $O(n)$. Similar graph structures (e.g., class of graphs with $\beta$ equal-sized connected components, which have very small mixing time such as expanders, that are connected via a path or ring) have a large gap between mixing time and local mixing time. 
\end{enumerate}

 \begin{figure}
\centering
    \includegraphics[scale=.5]{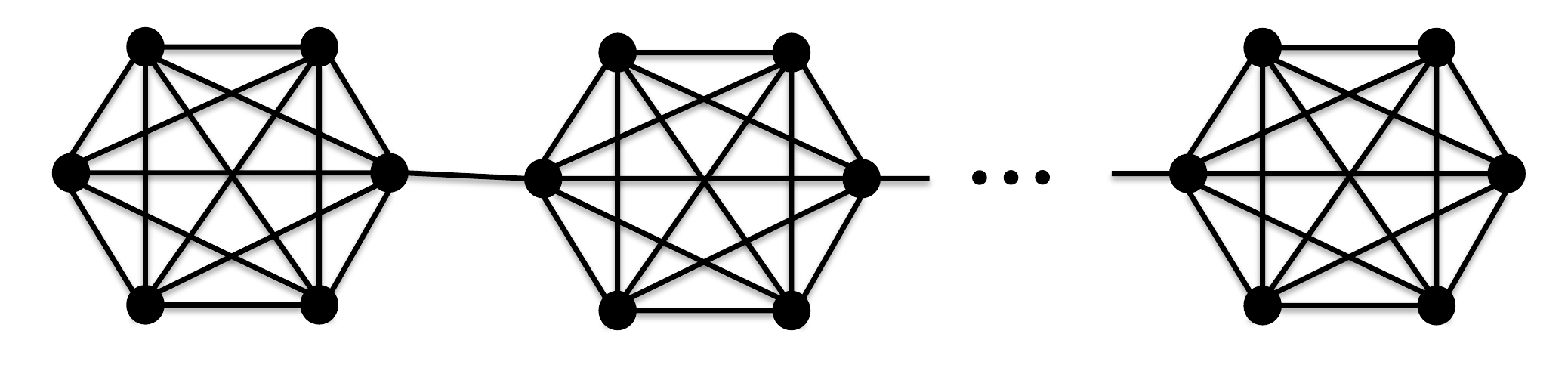}
\caption{$\beta$-barbell graph: a path of $\beta$ cliques of equal size. }
\label{fig:barbell}
\end{figure}
We next present a deterministic approach to compute the probability distribution $\pd_{\ell}$ of a random walk of any length $\ell$. The idea is adapted from the paper \cite{KM15} and explored in this paper to compute local mixing time.    

\subsection{Computation of Random Walk Probability Distribution}\label{sec:deterministic-simulation}

Let us compute the probability distribution $\pd_{\ell}$ starting from a given source node in the graph $G$. We present an algorithm (Algorithm~\ref{alg:rw-probability}) which approximates $\pd_{\ell}$ in $\ell$ time in the CONGEST model. The algorithm essentially simulates the probability distribution of each step of the random walk starting from a source node by a deterministic flooding technique. At the beginning of any round $t$, each node $u$ sends $p_{t-1}(s, u)/d(u)$ to its $d(u)$-neighbors and at the end of the round $t$, each node $u$ computes $p_t(s, u) = \sum_{v \in N(u)} p_{t-1}(s, v)/d(v)$. After $\ell$ rounds, each node $u$ will output its (estimated) probability $\tilde{p}_{\ell}(u)$. The estimated probabilities $\tilde{\pd}_{\ell}$ can be made as close as to the exact values $\pd_{\ell}$, i.e., $\mid \tilde{p}_{\ell}(u) - p_{\ell}(u) \mid < \eps$, for any small $\eps < 1$. In fact, this deterministic approach can compute exact probability distribution $\pd_{\ell}$ in principle. Since, in the CONGEST model, only $O(\log n)$ bits are allowed to be exchanged, it's not possible to send a real number $p_t(s, u)$ through an edge; instead an approximated value (rounding off) of size $O(\log n)$ bits can be sent. Thus, it is possible to compute a close approximation to the probability distribution $\pd_{\ell}$ of a random walk of any length $\ell$.   %One interesting property of this approach is that it computes the $\ell$-walk probability distribution of a {\em lazy random walk} in time $\ell$, unlike the Markov process approach \cite{Levin, drw-jacm} which may be slower by a factor of the lazyness. \anis{I think the approach of our ICDCN17 paper also computes in $\ell$ rounds. The reason is we perform many random walks and each node account the number of walks lands over it.}

%\gopal{I changed to ``closest integer multiple of $1/n^c$"}

\begin{algorithm}[H] 
\small
\caption{\sc Estimate-RW-Probability}
\label{alg:rw-probability}
\textbf{Input:} A graph $G = (V, E)$, a source node $s$ and the length $\ell$.\\
\textbf{Output:} Each node $u$ outputs $\tilde{p}_{\ell}(u)$. 

\begin{algorithmic}[1]
\STATE Initialization: at source node $s$, $w_0(s) = 1$ and at all other nodes $u$, $w_0(u) = 0$. 
 
\FOR{each round $t = 1, 2, \dots, \ell$}
\STATE Each node $u$ whose $w_{t-1}(u) \neq 0$, does the following in parallel: \label{stp:det-flooding}\\
 (i) send $w_{t-1}(u)/d(u)$ to all the neighbors $v \in N(u)$. \\
 (ii) Compute the sum (say, $\sigma$) of the received values from all neighbors $v \in N(u)$  and round it to the closest integer $nint(\sigma n^c)$ multiple of $1/n^c$, for any integer $c \geq 6$, where $nint(\cdot)$ is nearest integer function. Store this rounded value as $w_t(u)$. 
\ENDFOR  

\STATE Each node $u$ outputs $\tilde{p}_{\ell}(u) = w_{\ell}(u)$. 
\end{algorithmic}
\end{algorithm}

Note that at each step the value $\sum_{v \in N(u)} w_{t-1}(v)/d(v)$ at node $u$ is rounded to the closest integer multiple of $1/n^c$. Intuitively, the error of estimation is at most $1/n^c$ for each step. Thus the following error bound (Lemma~\ref{lem:error-bound}) of the approximation holds. The proof can be easily adapted from the Lemma~8 in \cite{KM15}.
\begin{lemma}\label{lem:error-bound}
At any time $t$,  $\mid \tilde{p}_{t}(u) - p_{t}(u) \mid < tn^{-c}$, for all the nodes $u$. 
\end{lemma}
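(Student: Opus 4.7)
The plan is to track the per-node rounding error $e_t(u) := \tilde{p}_t(u) - p_t(u)$ by induction on $t$, exploiting the fact that the pass in Step~\ref{stp:det-flooding} is a linear averaging and the only ``external'' perturbation each round is the rounding correction, which is controlled by construction.

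First I would derive the one-step recurrence. Without rounding, $p_t(u) = \sum_{v \in N(u)} p_{t-1}(v)/d(v)$; with rounding, the algorithm sets $\tilde{p}_t(u) = \sum_{v \in N(u)} \tilde{p}_{t-1}(v)/d(v) + \rho_t(u)$, where $\rho_t(u)$ is the correction from rounding to the nearest multiple of $1/n^c$, so $|\rho_t(u)| \le 1/(2n^c)$. Subtracting yields
\[ e_t(u) \;=\; \sum_{v \in N(u)} \frac{e_{t-1}(v)}{d(v)} \;+\; \rho_t(u), \]
with the base case $e_0 \equiv 0$ guaranteed by the initialization.

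Next I would prove by induction the sharper statement $|e_t(u)| \le t/(2n^c)$, which immediately implies the desired strict bound $|e_t(u)| < t n^{-c}$. The base case is trivial. For the inductive step, the triangle inequality applied to the recurrence gives
\[ |e_t(u)| \;\le\; \frac{t-1}{2n^c}\sum_{v \in N(u)} \frac{1}{d(v)} \;+\; \frac{1}{2n^c}. \]
The key observation is that in a $d$-regular graph $\sum_{v \in N(u)} 1/d(v) = d \cdot (1/d) = 1$, so the right-hand side collapses to $t/(2n^c)$ and closes the induction.

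The main obstacle is precisely that $\sum_{v \in N(u)} 1/d(v)$ need not be at most $1$ in a general graph, so a naive induction could let the error blow up by a factor as large as $d(u)$ per round; this is why the regularity assumption is essential (and consistent with the rest of the paper). Equivalently, one can unroll the recurrence to $e_t = \sum_{k=1}^{t} A^{t-k}\rho_k$ and observe that for a regular graph the column sums $\sum_v (A^{t-k})_{uv} = \sum_v P^{t-k}(v,u)$ equal $1$ by the symmetry $P^{j}(u,v) = P^{j}(v,u)$, yielding the same $t/(2n^c)$ bound; this alternative viewpoint may be useful later when reasoning about how these errors propagate through the full approximation argument.
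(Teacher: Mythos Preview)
The paper does not give its own proof---it only offers the one-line intuition that ``the error of estimation is at most $1/n^c$ for each step'' and defers to Lemma~8 of \cite{KM15}; your induction on $t$ is exactly the standard formalization of that intuition and is correct. Your caveat that the key step $\sum_{v\in N(u)}1/d(v)=1$ requires regularity is accurate and harmless here, since the paper only invokes Algorithm~\ref{alg:rw-probability} under the standing regularity assumption of Section~\ref{sec:local-mixing-time}.
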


Therefore, the algorithm finishes in $\ell$ time and computes a close approximation of the probabilities $p_{\ell}(u)$. Since the mixing time (and hence the local mixing time) is at most $O(n^3)$ for any graph, choosing $c = 6$ would suffice to get a very small approximation error. It is to be noted that a randomized algorithm presented in \cite{icdcn17} does the same job with high probability in $\ell$ time as well.

%%% LOCAL MIXING TIME ALGO INPUT %%%
%!TEX root = root_arxiv.tex

\section{Local Mixing Time Computation}\label{sec:local-mixing-time}

Let us assume the graph is regular and degree of each node is $d$. %\footnote{Our  approach can be  extended to general graphs by doing a lazy random walk, which can worsen the estimate as well as the computation of the local mixing time by a factor $O(d_{max}/d_{min})$, where $d_{max}$ and $d_{min}$ are respectively the maximum and minimum degrees of the graph. \anis{is it true? I think it's true as the lazy walk essentially converts a general graph to a regular graph. But the max degree $d_{max}$ must be known to all the nodes---we should add this assumption.}}   
Then the volume of any set $S\subseteq V$ is $\mu(S) = d|S|$ and the non-zero entries in the restricted stationary distribution $\pi_S$ are all $1/|S|$. Let $s$ be the given source node from where the local mixing time $\tau_{s}(\beta, \eps)$ needs to be computed. We assume the error of estimation $\eps$ to be any  arbitrarily small (but fixed) positive constant  in the Definition~\ref{def:loc-mix-time} (say, we can choose $\eps = 1/8e$ which is typically done). %This is a standard assumption in the definition of mixing time, however, one can choose any smaller $\eps$-value for better accuracy. 
%In fact, one can choose any smaller $\eps$-value than the standard assumption $1/2e$ for better accuracy.  
Further, we assume that the graph satisfies the condition $\tau_s(\beta, \eps)\phi(S) = o(1)$ for every  $s$ and for every set $S$,  where $S$ is the set  where the random walk locally mixes (cf. Definition \ref{def:loc-mix-time}). (Note that we don't know 
$S$ a priori). We make this assumption so that our algorithm can compute a $2$-approximation of the local mixing time $\tau_s(\beta, \eps)$ efficiently; this is typically the interesting case, when the local mixing time is much smaller than the mixing time. We also show an easy extension of the algorithm to compute the (exact) local mixing time $\tau_s(\beta, \eps)$ in general regular graphs (without any conditions), but that takes slightly longer time. Therefore, the goal is to compute the minimum time $t$, such that $||\pd_t{\restriction_S} - 1/|S| ||_1 < \eps$, on a set $S$ that is as small as possible, but of size at least $n/\beta$. Recall that $||\pd_t{\restriction_S} - 1/|S| ||_1 = \sum_{u\in S} |p_t(u) - 1/|S||$. 

\begin{algorithm}[h] 
\small
\caption{\sc Local-Mixing-Time}
\label{alg:local-mixing-time}
\textbf{Input:} A graph $G = (V, E)$, a source node $s$, a positive constant $\beta$ and a fixed accuracy parameter $\eps$ (arbitrarily small positive constant).\\
\textbf{Output:} An approximate local mixing time $\tau(\beta, \eps)$. 

\begin{algorithmic}[1]
 
\FOR{each $h = 0, 1, 2, 3,  \dots $}
\STATE $\ell = 2^h$ \label{stp:creat-token}

\STATE The node $s$ computes a BFS tree of depth $\min\{D, \ell \}$ via flooding. \label{stp:bfs-tree}

\STATE Run Algorithm~\ref{alg:rw-probability} with $s$ as source node and $\ell$ as the length. Each node $u$ will have $p_{\ell}(u)$ in the end. \label{stp:rw-prob-computation}

\FOR{$R = n/\beta, (1 +\eps)n/\beta, (1 + \eps)^2n/\beta, \ldots, n$} \label{stp:doubling-set}

\STATE Each node $u$ computes the difference $x_u = |p_{\ell}(u) - 1/R|$. 
 
\STATE Node $s$ computes the sum of $R$ smallest $x_u$ values (let the sum is $\partial$) using the binary search method discussed below in Section~\ref{sec:analysis}. 

\STATE Node $s$ checks the following  locally: %if $R = n/\beta$, it checks 

\IF{$\partial < 4\eps $} \label{stp:checking1}
\STATE Output $\ell$ and  STOP.
%\STATE BREAK;
%\ELSE
%\STATE Continue from Step~\ref{stp:doubling-set} for the next value of $R$.
\ENDIF

%\STATE If $R \geq (1+\beta)n/\beta$, $s$ checks
%\IF{$\partial < 4\eps $} \label{stp:checking2}
%\STATE BREAK;
%\ELSE
%\STATE Continue from Step~\ref{stp:doubling-set} for the next value of $R$.
%\ENDIF

\ENDFOR  \label{stp:end-for-loop1}
\ENDFOR 
%\STATE Let's say $\ell = 2^l$ for some integer $l$. 
%\STATE Perform binary search from $2^{l-1}$ to $2^l$ to find an integer  $\ell$ such that:
%\STATE  For the length $\ell$, $\exists \, S$ such that $\sum_{u\in S} |p_{\ell}(u) - 1/|S|| \leq \eps$, by computing from the Step~\ref{stp:rw-prob-computation} to the Step~\ref{stp:checking}.
%\STATE Output $\ell$.

\end{algorithmic}
\end{algorithm}

The algorithm starts with the random walk length $\ell = 1$ and the computation proceeds in iterations. After each iteration, the value of $\ell$ is incremented by a factor $2$ i.e., doubled. In an iteration, the algorithm first computes the probability distribution $\pd_{\ell}$ of a random walk of length $\ell$ starting from the given source node $s$. For this, it uses Algorithm~\ref{alg:rw-probability} from the previous section. Then every node $u$ locally computes the difference $x_u = |p_{\ell}(u) - \beta/n|$ (the algorithm first looks for the minimum size mixing set $S$, i.e., of size $n/\beta$). The source node $s$ then collects $n/\beta$ smallest of those $x_u$s and checks if their sum is less than $4\eps$ (note that our algorithm will check for $4\eps$ instead of
$\eps$ for technical reasons that will be explained later). If `yes', then algorithm stops and outputs the length $\ell$ as the local mixing time. Otherwise, if the sum is greater than $4\eps$, the algorithm checks for the mixing set of size $(1 + \eps)n/\beta$. That is the source node collects $(1+\eps)n/\beta$ smallest of the differences $x_u = |p_{\ell}(u) - \beta/(1+\eps)n|$ and checks if their sum satisfies the $\eps$-condition.\footnote{It is shown in the analysis that we compute local mixing time with the accuracy parameter $4\eps$.} %for the size $|S|$ and  with $4\eps$ for the size $(1+\eps)n/\beta$.}.  
In general, if the sum of the $x_u$ values in a set $S$ did not satisfy the condition, the algorithm extends the search space by incrementing the size of the set by a factor of $(1+\eps)$. The algorithm starts with $|S| = n/\beta$ as the size of the local mixing set is at least $n/\beta$ (by the definition). This way the algorithm checks if there exists a set of size larger than $n/\beta$ where the random walk mixes locally. If such a set exists, the algorithm stops and outputs the length $\ell$ as the local mixing time. Else, the algorithm goes to the next iteration and does the same computation by doubling the random walk length to $2\ell$. The output of the algorithm is correct because it gives the existence of a set of size $\geq n/\beta$ where the local mixing time condition satisfies (cf. Definition~\ref{def:loc-mix-time}). The algorithm only computes the local mixing time and not the set where the random walk probability mixes. Hence, finding an $\ell$ satisfying the local mixing time condition is sufficient. The pseudocode is given in Algorithm~\ref{alg:local-mixing-time}. %In the analysis, we show that the partitioning is done in a way that the approximated probabilities correctly compute the local mixing time. 
 
% \gopal{What if the all the differences between the sets of sizes $|S|$ and $(1+\eps)|S|$ are greater than $\eps$, but the difference for the set of size $(1+\eps)|S|$ is less than $4\eps$, but greater than $\eps$. In that case also,
 %I guess we should stop and output the mixing time, since we satisfy with respect to the condition $4\eps$. So I guess, it is
% better to mention that we are computing the local mixing time where the parameter $\epsilon' \in (\eps, 4\eps)$. Or simply just 
% say $\eps = o(1)$.}

\subsection{Description and Analysis}\label{sec:analysis}

%At the beginning of the algorithm,  the source node $s$ computes a Breadth First Search (BFS) tree via flooding (see e.g., \cite{gopal-book}). Each node knows its parent in the BFS tree. The BFS tree construction takes diameter time, i.e., $O(D)$ rounds \cite{gopal-book}. 
Let us now discuss the details of the computation in each iteration of Algorithm~\ref{alg:local-mixing-time}, where $\ell$ varies starting from $1$ and doubles in each iteration. \\

\noindent \textbf{Compute BFS tree from $s$:\footnote{Instead of computing a BFS tree in each iteration, one can simplify the algorithm by computing a BFS tree of depth $D$ just once in the beginning of the algorithm, i.e., before the for-loop on $\ell$. However, this will incur an additional $O(D)$ term in the running time of the algorithm.}} 
The source node $s$ computes a Breadth First Search (BFS) tree of depth $\mathcal{D} = \min\{D, \ell\}$ via flooding (see e.g., \cite{gopal-book}), where $D$ is the diameter of the graph. Each node knows its parent in the BFS tree. The BFS tree construction takes $O(\mathcal{D})$ rounds \cite{gopal-book}. \\

%\begin{description}
\noindent \textbf{Compute the probability distribution $\pd_{\ell}$ of a random walk of length $\ell$ starting from $s$:} The source node $s$ runs Algorithm~\ref{alg:rw-probability} with input $\ell$. At the end, each node $u$ will have the probability $p_{\ell}(u)$ (some of the $p_{\ell}(u)$s could be zero). This takes $O(\ell)$ rounds, see Section~\ref{sec:deterministic-simulation}. 

We next discuss the details of each iteration of the {\bf for} loop (steps~\ref{stp:doubling-set}-\ref{stp:end-for-loop1} of Algorithm~\ref{alg:local-mixing-time}) where the size of the set $R$ varies starting from $n/\beta$ 
and increases by a factor $1+\eps$ in each iteration. \\%The first iteration is slightly special and is discussed next. 

\noindent \textbf{Every node $u$ computes the difference $x_u = |p_{\ell}(u) - 1/|S||$:} Since $|S| = n/\beta$ is known, each node $u$ can compute $x_u$ locally.  \\

\noindent \textbf{The source node $s$ collects $|S| = n/\beta$ smallest of $x_u$ values and checks if their sum is less than $4\eps$:} %The source node $s$ (as a root node) constructs a Breadth First Search (BFS) tree via flooding. %Note that a random walk of length $\ell$ can reach up to a distance $\ell$ from the source node. Hence a BFS tree of depth $D$ from the source node, will cover all the nodes which have non-zero probabilities. The BFS tree construction will take diameter time, i.e., $O(D)$ rounds.
Each node $u$ sends its $x_u$ value to the source node $s$. A naive way of doing this is to upcast (see e.g., \cite{gopal-book}) all the values through the BFS tree edges in a pipelining manner. Then the source node $s$ can take the $|S|$ smallest of them and checks locally if the sum is less than $4\eps$. The upcast may take $\Omega(n)$ time in the worst case due to congestion in the BFS tree. 
\\
To overcome the congestion, we use the following efficient approach. Instead of collecting all the $x_u$ at $s$, the $|S|$ smallest of them can be found by doing a binary search on $\{x_u \,|\, u\in V\}$. All the nodes in the BFS tree send $x_{\min}$ and $x_{\max}$ (the minimum and maximum respectively among all $x_u$) to the root $s$ through a convergecast process (e.g., see \cite{gopal-book}). This will take time proportional to the depth of the BFS tree. Then $s$ can count the number of nodes whose $x_u$ value is less than $x_{\med} = (x_{\min} + x_{\max})/2$ via a couple of broadcast and convergecast. In fact, $s$ broadcasts the value $x_{\med}$ to all the nodes via the BFS tree and then the nodes whose $x_u$ value is less than $x_{\med}$ (say, the {\em qualified nodes}), reply back with $1$ value through the convergecast. Depending on whether the number of qualified nodes is less than or greater than $1/|S|$, the root updates the $x_{\med}$ value (by again collecting $x_{\min}$ or $x_{\max}$ in the reduced set) and iterates the process until the count is exactly $1/|S|$. Then $s$ can determine the sum of $x_u$s from the qualified nodes (by a convergecast) and checks locally if the sum is less than $4\eps$. As a summary, this is essentially finding $|S|$ smallest $x_u$ values through a binary search on all the $x_u$ (for all $u \in V$). Each broadcast and convergecast takes $O(\mathcal{D})$ time (more precisely, the depth of the BFS tree) and being done a constant number of times to compute size of the qualified set. Further, another $O(\log n)$ factor is incurred for the binary search over $x_u$s, which gives $O(\mathcal{D} \log n)$ time overall. %Once the source node $s$ finds exactly $|S|$ smallest $x_u$s, it can locally check if the $\sum_{1/|S| \text{smallest} x_u} x_u$ by one more convergecast, where $S$ is the set of those smallest $x_u$ value nodes. \\

There might be multiple nodes with the same  $x_u$ value. To handle this, each node $u$ chooses a very small random number $r_u$ and adds it to $x_u$ in the beginning. Then it can be shown that with high probability all the $(x_u + r_u)$ values are different and at the same time the addition does not affect the sum significantly (which has to be less than $4\eps$). For an example, say all the nodes choose a random number $r_u$ from the interval $[1/n^8, 1/n^4]$. Then by adding $r_u$ to all the $n$ nodes, the sum value will increase by at most $n\cdot 1/n^4 = 1/n^3$ which is much smaller than $\eps$. Further, using Chernoff's bound  it can be easily shown that with high probability the values $x_u+r_u$ are all distinct, since $r_u$s are distinct.  \\
%\end{description} 

\noindent \textbf{Incrementing the size of the local mixing set $S$ by a factor $(1+\eps)$:} In the first iteration,  algorithm  checks the local mixing on a set $S$ of size $n/\beta$. More specifically, the source node $s$ collects $|S| = n/\beta$ smallest of $x_u$ values and checks if their sum is less than $4\eps$. If true, then the algorithm stops and outputs $\ell$ as the local mixing time. If not, then the algorithm looks for larger set in the next iteration, i.e.,  size $(1+\eps)|S|$. The source node collects $(1+\eps) n/\beta$ smallest of $x_u$ values and checks if their sum is less than $4\eps$. If true, it outputs $\ell$; if not, then it checks on the incremented set of size $(1+\eps)^2|S|$ and so on. %with the same error of estimation $4\eps$. Then, if not satisfied,  it checks the set of size $(1+\eps)^3|S|$ and so on. 
Below we discuss on why we check the sum condition with value $4\eps$ (cf. Lemma~\ref{lem:doubling-set-size}). %from size $(1+\eps)|S|$ onwards. 
The main idea behind the slightly relaxed  condition (i.e., $4\eps$) is that it indirectly checks whether the sum condition is satisfied,
i.e., $\sum |p_{\ell} - 1/s| < \eps$ for all set sizes {\em between the sizes that are actually checked}, i.e., $(1+\eps)^i n/\beta < s < (1+\eps)^{i+1} n/\beta$, for $i = 0, 1, \dots$.
In this way, for a particular length $\ell$, $s$ checks if there exists a local mixing set of size at least $n/\beta$. If $s$ is successful on some set, the algorithm stops and outputs the length $\ell$ as the local mixing time. Otherwise, if there is no such local mixing set (i.e., $\ell$ is not the local mixing time), the algorithm goes to the next iteration by doubling the length $\ell$ of the random walk. The output is correct because it gives the existence of a set where the local mixing time condition satisfies. Hence, finding an $\ell$ satisfying the local mixing time condition is sufficient. The following lemma shows the correctness of the above incrementation approach. 

\begin{lemma}\label{lem:doubling-set-size}
Let $S_1$  be any set (of smallest $x_u$ values) such that $|S_1|$ lies between $|S|$ and $(1+\eps)|S|$, i.e.,
$|S| < |S_1| < (1+\eps)|S|$. Let $S_2 \supseteq S_1$ be the set of size $(1+\eps)|S|$ (this is a set considered by
the algorithm). Further assume that $\sum_{u \in S_1} |p_{\ell}(u) - \frac{1}{|S_1|}| < \eps$. Then
$ \sum_{u \in S_2} \left|p_{\ell}(u) - \frac{1}{(1+\eps)|S|}\right| < 4\eps.$
\end{lemma}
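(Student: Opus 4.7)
The plan is to split the sum over $S_2$ into the part over $S_1$ and the part over $S_2 \setminus S_1$, and bound each separately using the triangle inequality together with the hypothesis. Write $t := |S_1|$ and $T := (1+\eps)|S|$, so $|S| \le t < T$ and $|S_2 \setminus S_1| = T - t < \eps|S|$.

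For $u \in S_1$, the triangle inequality gives $|p_\ell(u) - 1/T| \le |p_\ell(u) - 1/t| + |1/t - 1/T|$. Summing over $u \in S_1$ yields
\[
\sum_{u \in S_1} \left|p_\ell(u) - \tfrac{1}{T}\right| \;\le\; \sum_{u \in S_1} \left|p_\ell(u) - \tfrac{1}{t}\right| \;+\; t \cdot \left|\tfrac{1}{t} - \tfrac{1}{T}\right| \;<\; \eps + \left(1 - \tfrac{t}{T}\right).
\]
Since $t \ge |S| = T/(1+\eps)$, the term $1 - t/T$ is at most $\eps/(1+\eps) < \eps$, so the $S_1$-part contributes less than $2\eps$.

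For $u \in S_2 \setminus S_1$, I would use the key observation that the hypothesis forces $p_\ell$ to place nearly all its mass on $S_1$. Specifically, from $\sum_{u \in S_1} |p_\ell(u) - 1/t| < \eps$ and $\sum_{u \in S_1} 1/t = 1$, the reverse triangle inequality gives $\sum_{u \in S_1} p_\ell(u) > 1 - \eps$, hence $\sum_{u \notin S_1} p_\ell(u) < \eps$ and in particular $\sum_{u \in S_2 \setminus S_1} p_\ell(u) < \eps$. Then bounding $|p_\ell(u) - 1/T| \le p_\ell(u) + 1/T$ and summing,
\[
\sum_{u \in S_2 \setminus S_1} \left|p_\ell(u) - \tfrac{1}{T}\right| \;\le\; \sum_{u \in S_2 \setminus S_1} p_\ell(u) \;+\; \tfrac{T - t}{T} \;<\; \eps + \tfrac{\eps}{1+\eps} \;<\; 2\eps.
\]
Adding the two contributions yields the desired bound $< 4\eps$.

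The proof is really just two applications of the triangle inequality, so there is no substantive obstacle; the only subtle point is recognizing that the hypothesis on $S_1$ does double duty — it controls both the deviation of $p_\ell$ from $1/t$ on $S_1$ and the total mass that can sit outside $S_1$. That mass bound is what makes the sum over $S_2 \setminus S_1$ manageable even though we have no a priori control over individual $p_\ell(u)$ for $u \notin S_1$. The factor-of-$4$ slack built into the algorithm's stopping condition is exactly the sum of the four $\eps$-sized error terms arising above (two from the triangle inequalities relating $1/t$ to $1/T$, one from the $S_1$ hypothesis itself, and one from the total leakage of probability mass outside $S_1$).
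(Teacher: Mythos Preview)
Your proof is correct and follows essentially the same approach as the paper: split the sum over $S_2$ into $S_1$ and $S_2\setminus S_1$, bound the $S_1$-part by the triangle inequality relating $1/|S_1|$ to $1/((1+\eps)|S|)$, and bound the $S_2\setminus S_1$-part by observing that the hypothesis forces $\sum_{u\in S_1} p_\ell(u) > 1-\eps$ so the remaining mass is at most $\eps$. Your presentation is in fact slightly cleaner, since you correctly write $|S_2\setminus S_1| = T - t < \eps|S|$ whereas the paper writes this as an equality.
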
 
\begin{proof}
We have, $|S_2 \setminus S_1| = \eps |S|$. First note that:
\begin{align}
\left|p_{\ell}(u) - \frac{1}{(1+\eps)|S|}\right| & \leq \left|p_{\ell} (u) - \frac{1}{|S_1|}\right| + \left|\frac{1}{|S_1|} - \frac{1}{(1+\eps)|S|}\right| \nonumber \\
& \leq \left|p_{\ell} (u) - \frac{1}{|S_1|}\right| + \left|\frac{1}{|S|} - \frac{1}{(1+\eps)|S|}\right| \nonumber \\
& = \left|p_{\ell} (u) - \frac{1}{|S_1|}\right| + \frac{1}{|S|}\frac{\eps}{1+\eps} \nonumber%\label{eq:one}
\end{align}
Therefore, 
\begin{align}
\sum_{u \in S_1} \left|p_{\ell}(u) - \frac{1}{(1+\eps)|S|}\right| & \leq \sum_{u \in S_1} \left|p_{\ell}(u) - \frac{1}{|S_1|}\right| + \frac{\eps |S_1|}{(1+\eps)|S|}  \nonumber \\ 
& <  2\eps \label{eq:one}
\end{align}
Also note that: 
%\vspace{-.5cm}
\begin{align*}
& \sum_{u \in S_1} \left|p_{\ell}(u) - \frac{1}{|S_1|}\right| < \eps \\
& \Rightarrow -\eps < \sum_{u \in S_1} \left(p_{\ell}(u) - \frac{1}{|S_1|}\right) < \eps  \Rightarrow \sum_{u \in S_1} p_{\ell}(u) > 1 - \eps \\
& \Rightarrow \sum_{u \in S_2\setminus S_1} p_{\ell}(u) \leq \eps  \hspace{1cm} \text{[since, $\sum_{u \in V} p_{\ell}(u) = 1$]}
\end{align*}
Then,  
\begin{align}
\sum_{u \in S_2\setminus S_1} \left|p_{\ell}(u) - \frac{1}{(1+\eps)|S|}\right| & \leq \sum_{u \in S_2\setminus S_1} p_{\ell}(u) + \frac{|S_2\setminus S_1|}{(1+\eps)|S|}  \nonumber \\ 
& = \sum_{u \in S_2\setminus S_1} p_{\ell}(u) + \frac{\eps |S|}{(1+\eps)|S|}  \nonumber \\ 
& < \eps + \eps = 2\eps \label{eq:two}
\end{align}
Furthermore, since the algorithm compares the sum of the smallest differences in all the sets, we get,
\begin{align*}
\sum_{u \in S_2} \left|p_{\ell}(u) - \frac{1}{(1+\eps)|S|}\right| & = \sum_{u \in S_1} \left|p_{\ell}(u) - \frac{1}{(1+\eps)|S|}\right| \\ & + \sum_{u \in S_2\setminus S_1} \left|p_{\ell}(u) - \frac{1}{(1+\eps)|S|}\right| \\
& < 4\eps \hspace{.5cm}\text{[from Equation~\ref{eq:one} and ~\ref{eq:two}]}
\end{align*}
\end{proof}

The above lemma says that if there is a set $S_1$ of size that lies  between $|S|$  and $(1+\eps)|S|$ such that the sum difference in $S_1$ is less than $\eps$, then the sum difference in the incremented set $S_2$ of size $(1+\eps)|S|$ is less than $4\eps$. Moreover, if the sum $\sum_{u \in S_1} |p_{\ell}(u) - \frac{1}{|S_1|}| \geq 4\eps$, then automatically the sum is greater than $\eps$.  Hence, it is sufficient to check with $4\eps$ for all the sets of size $\geq n/\beta$. \\

%% 2-approximation %%% 
\noindent \textbf{Doubling the length $\ell$ after each iteration:} Finally, we show that the doubling of the random walk length $\ell$ in each iteration gives a $2$-approximation of the local mixing time $\tau_s(\beta, \eps)$. We remark that the monotonicity property of the distribution $\pd_{\ell}$ doesn't hold over a restricted set $S\subset V$ in general. Thus the local mixing time is not monotonic, unlike the mixing time of a graph, see Lemma~\ref{lem:monotonicity}. Hence, in general, binary search on length $\ell$ will not work. %Binary search is a way to find the exact value of the local mixing time $\tau_s(\beta, \eps)$ faster. 
However, the idea of doubling the length $\ell$ in each iteration will work as we show that the amount of probability that goes out from a set $S$ (where the walk mixes locally) in the next $\ell$ steps of the random walk is very small i.e., $o(1)$. As we discussed in Section~\ref{sec:graph-examples}, the local mixing time is interesting and effective on the graphs where the local mixing time is very small compared to the mixing time. Also the mixing time estimates the conductance of the graph. This intuitively justifies our assumption $\tau_s(\beta, \eps)\phi(S) = o(1)$, where $\tau_s(\beta, \eps)$ is the local mixing time w.r.t the source $s$ and $\phi(S)$ is the conductance of the set $S$ ($S$ is the set where the random walk mixes locally). Recall that the conductance of the set $S$ is defined as $\phi(S) = |E(S, V\setminus S)|/\min\{\mu(S), \mu(V\setminus S)\}$.     

% i.e., when $\tau(\beta, \eps)/ \tau^{mix} = o(1)$.

Suppose $\ell$ be the local mixing time and $S$ is the set where the random walk locally mixes. Then we show that starting from the stationary distribution in $S$, the  amount of probability   that goes out of the set $S$ after another $\ell$ steps of the walk is at most $\ell \phi(S)$. 

%\gopal{I changed the proof of the following lemma. The amount that goes out is $|E(S, V\setminus S)|/d|S|$, regardless
%of the distribution of values of $S$. Right? So I commented out the second line of proof. Please check.}\anis{Yes...true!}

\begin{lemma}\label{sec:escaping-mass}
Let $S\subset V$ be a set of size $n/\beta$ where a random walk probability distribution locally mixes in $\ell$ steps when started from a source node $s\in S$. Let $\pd_{\ell}$ % = (p_{\ell}(u_1), p_{\ell}(u_2), \ldots, p_{\ell}(u_n))$ 
be the probability distribution at time $\ell = \tau_s(\beta, \eps)$. Assume, $\tau_s(\beta, \eps)\phi(S) = o(1)$. Then  $||\pd_{2\ell}{\restriction_S} - 1/|S|||_1 < 2\eps$, i.e.,
the local mixing time condition in $S$ is satisfied (with parameter $2\eps$) at length $2\ell$.
\end{lemma}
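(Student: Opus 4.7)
The natural strategy is to split the total $L_1$-error into a contribution coming from the already-small deviation of $\pd_\ell$ from the restricted stationary vector, and a contribution coming from how much probability mass escapes the set $S$ during the next $\ell$ steps of the walk. Let $\pmb{\pi}_S^{\mathrm{ext}}$ denote $\pmb{\pi}_S$ extended by zero to all of $V$, so $\pmb{\pi}_S^{\mathrm{ext}}$ is an honest probability distribution on $V$ with $\pi_S^{\mathrm{ext}}(u)=1/|S|$ on $S$. First I would note that the definition of $\ell=\tau_s(\beta,\eps)$ gives $\sum_{u\in S}|p_\ell(u)-1/|S||<\eps$, which in turn forces $\sum_{u\notin S}p_\ell(u)<\eps$ (since $\sum_{u\in S}p_\ell(u)>1-\eps$). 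Hence
\[
\|\pd_\ell-\pmb{\pi}_S^{\mathrm{ext}}\|_1 \;=\; \sum_{u\in S}|p_\ell(u)-1/|S||+\sum_{u\notin S}p_\ell(u)\;<\;2\eps.
\]

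Next I would push this estimate forward $\ell$ steps using the $L_1$-contraction of the transition operator. The proof of Lemma~\ref{lem:monotonicity} shows that $\|A\vec{x}\|_1\le\|\vec{x}\|_1$ for every vector $\vec{x}$, so applying $A^\ell$ to the difference $\pd_\ell-\pmb{\pi}_S^{\mathrm{ext}}$ gives
\[
\|\pd_{2\ell}-A^\ell\pmb{\pi}_S^{\mathrm{ext}}\|_1 \;\le\; \|\pd_\ell-\pmb{\pi}_S^{\mathrm{ext}}\|_1 \;<\;2\eps,
\]
and in particular the restriction to $S$ inherits this bound.

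The main work is then to show that $A^\ell\pmb{\pi}_S^{\mathrm{ext}}$ itself barely moves away from $\pmb{\pi}_S^{\mathrm{ext}}$ in $\ell$ steps, with a drift controlled by the conductance of $S$. This is where the regularity assumption is crucial. For a $d$-regular graph the update rule $\mu_{t+1}(u)=\frac{1}{d}\sum_{v\in N(u)}\mu_t(v)$ is a coordinate-wise average, so $\|\mu_{t+1}\|_\infty\le\|\mu_t\|_\infty$. Starting from $\mu_0=\pmb{\pi}_S^{\mathrm{ext}}$, which has $L_\infty$ norm $1/|S|$, we therefore have $\mu_t(u)\le 1/|S|$ for every $u$ and every $t$. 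The mass leaving $S$ in one step is
\[
\alpha_t \;=\; \sum_{u\in S}\mu_t(u)\cdot\frac{|N(u)\cap(V\setminus S)|}{d} \;\le\; \frac{1}{|S|}\cdot\frac{|E(S,V\setminus S)|}{d} \;=\; \frac{|E(S,V\setminus S)|}{\mu(S)} \;\le\; \phi(S),
\]
the last inequality using that $|S|=n/\beta\le n/2$ and thus $\mu(S)\le\mu(V\setminus S)$. Summing over $t=0,\dots,\ell-1$ bounds the total mass in $V\setminus S$ at time $\ell$ by $\ell\,\phi(S)$. Since for $u\in S$ we have $(A^\ell\pmb{\pi}_S^{\mathrm{ext}})(u)\le 1/|S|=\pi_S(u)$, the missing mass inside $S$ equals the escaped mass, giving
\[
\|(A^\ell\pmb{\pi}_S^{\mathrm{ext}})\!\restriction_S - \pmb{\pi}_S\|_1 \;=\; 1-\sum_{u\in S}(A^\ell\pmb{\pi}_S^{\mathrm{ext}})(u) \;\le\; \ell\,\phi(S) \;=\; o(1)
\]
by the standing hypothesis $\tau_s(\beta,\eps)\phi(S)=o(1)$.

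A triangle inequality then closes the argument:
\[
\|\pd_{2\ell}\!\restriction_S - \pmb{\pi}_S\|_1 \;\le\; \|\pd_{2\ell}-A^\ell\pmb{\pi}_S^{\mathrm{ext}}\|_1 + \|(A^\ell\pmb{\pi}_S^{\mathrm{ext}})\!\restriction_S - \pmb{\pi}_S\|_1 \;<\; 2\eps + o(1),
\]
which for $n$ large enough is below $2\eps$ (equivalently, one can state the lemma with a $2\eps+o(1)$ bound). The main obstacle I anticipate is precisely the middle step: establishing the per-step leakage bound $\alpha_t\le\phi(S)$ in a way that holds not only at $t=0$ but throughout the $\ell$ steps, since after one step the distribution is no longer $\pmb{\pi}_S^{\mathrm{ext}}$ and one cannot read off the leakage from the definition of conductance directly. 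Leveraging the non-increasing $L_\infty$ norm on regular graphs is the cleanest way around this, and it is the one indispensable place where the regularity hypothesis of Section~\ref{sec:local-mixing-time} is used.
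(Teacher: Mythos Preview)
Your argument is correct and follows the same overall strategy as the paper---bound the mass that escapes $S$ during steps $\ell+1,\dots,2\ell$ by $\ell\,\phi(S)$ and finish with a triangle inequality---but the decomposition you use is slightly different and, in one respect, more careful. The paper compares $\pd_{2\ell}\!\restriction_S$ directly to $\pd_\ell\!\restriction_S$ and asserts without further justification that each crossing edge carries at most $1/(d|S|)$ at every one of the next $\ell$ steps; this is where its bound $\|\pd_{2\ell}\!\restriction_S-\pd_\ell\!\restriction_S\|_1\le \ell\phi(S)$ comes from. You instead route through the auxiliary distribution $A^\ell\pmb{\pi}_S^{\mathrm{ext}}$ and invoke the $L_\infty$-contraction of the averaging operator on regular graphs to make the per-step leakage bound $\alpha_t\le\phi(S)$ hold rigorously for all $t$, not just $t=0$; this fills precisely the gap the paper glosses over. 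The price you pay is a looser constant: the paper obtains $\eps+\ell\phi(S)<2\eps$, whereas your decomposition yields $2\eps+\ell\phi(S)=2\eps+o(1)$, which is strictly larger than $2\eps$ (your remark that it is ``below $2\eps$ for $n$ large'' is not quite right; it approaches $2\eps$ from above). For the use the lemma is put to in the algorithm this discrepancy is immaterial, since the algorithm already tests against $4\eps$.
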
    
\begin{proof}
 Since  $\ell$ is the local mixing time, the restricted probability distribution $\pd_{\ell}{\restriction_S}$ is $\eps$-close to the stationary distribution in $S$. %That is, for all $u_i \in S$, $p_{\ell}(u_i)$ is close to $1/|S|$ (recall that the graph is $d$-regular).
  Let $E(S, V\setminus S)$ be the set of edges between $S$ and  $V\setminus S$.  
  The amount of probability goes out of $S$ in one step (i.e., at time $\ell + 1$) is $|E(S, V\setminus S)|/d|S|$ (each crossing edge carries $1/d|S|$ fraction of the probability since the graph is $d$-regular). Note that some amount of probability may come in to $S$, but that's good for our upper bound claim. We know that conductance of $S$ is $\phi(S) = |E(S, V\setminus S)|/d|S|$.
% , assuming $|S| < |V|/2$ without loss of generality (as $S$ is the local mixing set whose size is much smaller than $n$). 
 Therefore, the total amount of probability that goes out of $S$ in the next $\ell$ steps (i.e., at time $2\ell$) is at most $\ell \phi(S)$.
Thus, $||\pd_{2\ell}{\restriction_S} - \pd_{\ell}{\restriction_S}||_1 \leq \ell \phi(S) = o(1)$.   
%\gopal{It is better to argue the below formally, using the $\eps$-condition.}
Hence, it follows from the assumption $\tau_s(\beta, \eps)\phi(S) = o(1)$ that the amount of probability that goes out of the set $S$ is $o(1)$.
%In other words, $1 - o(1)$ fraction of the probability remains inside $S$. 
 Moreover, $||\pd_{2\ell}{\restriction_S} - 1/|S|||_1 \leq ||\pd_{2\ell}{\restriction_S} - \pd_{\ell}{\restriction_S}||_1 + ||\pd_{\ell}{\restriction_S} - 1/|S|||_1 \leq \ell \phi(S) + ||\pd_{\ell}{\restriction_S} - 1/|S|||_1$. Hence for $\ell = \tau_s(\beta, \eps)$, $||\pd_{2\ell}{\restriction_S} - 1/|S|||_1 < \eps + \eps = 2\eps$, since $o(1) \leq \eps$ and  $||\pd_{\ell}{\restriction_S} - 1/|S|||_1 < \eps$. That is at length $2\ell$,  the local mixing time condition in $S$ is satisfied (with parameter $2\eps$).
 \end{proof}
 
 From the above lemma, it follows that the {\sc Local-Mixing-Time} algorithm must stop by the time $2 \tau_s(\beta, \eps)$, if the algorithm misses the exact local mixing time $\tau_s(\beta, \eps)$ when doubling the length in an iteration\footnote{Recall that the algorithm checks the $L_1$-norm condition with the accuracy parameter $4\eps$. Hence, it subsumes the $2\eps$ case when the length is doubling.}. Therefore, the output length $\ell$ is at most a $2$-approximation of the local mixing time. 

The running time of the above algorithm to compute the local mixing time is given in the following theorem. 
\begin{theorem}\label{thm:main-time-result}
Given an undirected regular graph $G$, a source node $s$ and a positive integer $\beta$, the algorithm {\sc Local-Mixing-Time} computes  a $2$-approximation of the local mixing time $\tau_{s}(\beta, \eps)$ with high probability and finishes in $O(\tau_s(\beta, \eps) \log^2 n \log_{(1+\eps)} \beta)$ time, provided $\tau_{s}(\beta, \eps) \phi(S) = o(1)$, where $S$ is the local mixing set.  
\end{theorem}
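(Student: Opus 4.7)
I would split the claim into (i) correctness -- that the returned length $\tilde\ell$ is a $2$-approximation of $\tau_s(\beta,\eps)$ -- and (ii) the round-complexity bound. For correctness I plan to combine Lemma~\ref{sec:escaping-mass} (the escape bound, which is exactly where the hypothesis $\tau_s(\beta,\eps)\phi(S)=o(1)$ is used) with Lemma~\ref{lem:doubling-set-size} (which licenses the $4\eps$ acceptance test used by the algorithm at geometric grid sizes). For the running time I plan to charge one outer iteration at length $\ell=2^h$ by composing the costs of BFS, Algorithm~\ref{alg:rw-probability}, and the binary-search-via-broadcast/convergecast primitive of Section~\ref{sec:analysis}, then sum the geometric series in $h$.

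For correctness, since $\ell$ doubles in the outer loop, some iteration has $\ell^\star\in[\tau_s(\beta,\eps),\,2\tau_s(\beta,\eps)]$. Let $S^\star$ be a local mixing set (of size $\ge n/\beta$) attaining $\tau_s(\beta,\eps)$. Lemma~\ref{sec:escaping-mass} gives $\|\pd_{\ell^\star}{\restriction_{S^\star}} - 1/|S^\star|\|_1 < 2\eps$, i.e., the sum of the smallest $|S^\star|$ deviations $x_u$ at length $\ell^\star$ is below $2\eps$. The inner loop visits some grid size $R=(1+\eps)^i n/\beta$ with $R\le |S^\star|\le(1+\eps)R$; Lemma~\ref{lem:doubling-set-size} (applied with $S$ of size $R$, $S_1=S^\star$, $S_2$ of size $(1+\eps)R$) then shows the sum check at the next grid point lies below $4\eps$, so the algorithm accepts at $\ell^\star$ or earlier, giving $\tilde\ell\le 2\tau_s(\beta,\eps)$. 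The random perturbations $r_u\in[1/n^8,1/n^4]$ render all $x_u+r_u$ distinct with high probability by a union bound over $\binom{n}{2}$ pairs, while perturbing every aggregate sum by at most $n\cdot n^{-4}=n^{-3}\ll\eps$; this supplies the ``with high probability'' qualifier.

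For the running time, fix an outer iteration with $\ell=2^h$ and $\mathcal{D}=\min(D,\ell)$. I would charge $O(\mathcal{D})$ rounds for the BFS, $O(\ell)$ for Algorithm~\ref{alg:rw-probability} (by Lemma~\ref{lem:error-bound} the rounded messages fit in $O(\log n)$ bits), and, for each of the $O(\log_{(1+\eps)}\beta)$ grid values of $R$, $O(\mathcal{D}\log n)$ rounds to execute the binary search -- $O(\log n)$ rounds of a constant number of broadcast/convergecast passes of depth $\mathcal{D}$. One outer iteration therefore costs $O\bigl(\ell+\mathcal{D}\log n\log_{(1+\eps)}\beta\bigr)=O\bigl(\ell\log n\log_{(1+\eps)}\beta\bigr)$. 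Summing the geometric series up to $\ell=O(\tau_s(\beta,\eps))$, and absorbing the $O(\log\tau_s)=O(\log n)$ outer iterations (using $\tau_s\le n^3$) into a second logarithm, yields the stated $O\bigl(\tau_s(\beta,\eps)\log^2 n\log_{(1+\eps)}\beta\bigr)$.

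The main obstacle is the correctness analysis: the acceptance threshold is inflated from $\eps$ to $4\eps$ precisely to absorb (a) the escape $o(1)\le\eps$ between lengths $\ell^\star$ and $\tau_s(\beta,\eps)$, (b) the rounding from the true $|S^\star|$ to a grid value via Lemma~\ref{lem:doubling-set-size}, and (c) the perturbation mass. I need to verify that a $4\eps$-acceptance cannot trigger at some $\ell$ that would violate the $2$-approximation; I plan to read the guarantee as $\tilde\ell\le 2\tau_s(\beta,\eps)$ together with a certificate that at $\tilde\ell$ the walk $(4\eps)$-locally-mixes on a set of size $\ge n/\beta$, which is the intended $2$-approximation notion in this paper.
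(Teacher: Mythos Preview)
Your proposal is correct and mirrors the paper's argument: the paper defers correctness to the preceding discussion (precisely Lemmas~\ref{lem:doubling-set-size} and~\ref{sec:escaping-mass}, which you invoke explicitly, including the reading of ``$2$-approximation'' as $\tilde\ell\le 2\tau_s$ together with a $4\eps$-mixing certificate) and then tallies the same per-iteration costs---BFS in $O(\mathcal{D})$, Algorithm~\ref{alg:rw-probability} in $O(\ell)$, and the binary-search convergecast in $O(\mathcal{D}\log n)$ over $O(\log_{(1+\eps)}\beta)$ grid sizes---multiplied by $O(\log n)$ outer iterations. One small redundancy: once you sum the geometric series in $\ell$ you have already accounted for all outer iterations and in fact obtain $O(\tau_s\log n\log_{(1+\eps)}\beta)$; the extra $\log n$ you then ``absorb'' simply matches the paper's looser bound, which upper-bounds every iteration by the last and multiplies by the number of iterations rather than summing.
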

\begin{proof} 
The correctness of the algorithm is described above. We calculate the running time. %First the source node $s$ computes a BFS tree which takes $O(D)$ rounds. 
The algorithm iterates $O(\log n)$ times, for $\ell = 1, 2, 4, \ldots, 2^i, \ldots, 2\tau_s(\beta, \eps)$. In each iteration: 
\begin{enumerate}[(1)]
\item the source node computes a BFS tree, which takes $O(\mathcal{D})$ rounds. 
\item the algorithm runs Algorithm~\ref{alg:rw-probability} as a subroutine. It takes $O(\ell)$ rounds. 
\item $s$ collects the sum of $R$ smallest $x_u$s through the BFS tree. It takes $O(\mathcal{D}\log n)$ rounds. This is done for all $R = (1+\eps)^i|S|$, where $i=0, 1, 2, \ldots$. It will take $O(\log_{(1+\eps)} \beta)$ time as  $(1+\eps)^i|S| = n$. Hence the time taken is $O(\mathcal{D}\log n \log_{(1+\eps)} \beta)$ rounds.
\item  checking if the sum of differences is less than $\eps$ and $4\eps$, can be done locally at $s$.%, hence takes $O(1)$ rounds.
\end{enumerate} 
Since $\ell \leq  2\tau_s(\beta, \eps)$, the total time required is $(O(\mathcal{D}) + O(\tau_s(\beta, \eps)) + O(\mathcal{D} \log n \log_{(1+\eps)} \beta))\log n$, which is bounded by $O(\tau_s(\beta, \eps) \log^2 n \log_{(1+\eps)} \beta)$, since $\mathcal{D} \leq \ell \leq 2\tau_s(\beta, \eps)$.
%bounded by $O(\tau_s(\beta, \eps)\log^2 n \log_{(1+\eps)} \beta)$, since $D \leq \tau_s(\beta, \eps)$. 
\end{proof}
%\vspace{.2cm}

\subsection{Algorithm for Computing Exact Local Mixing Time}\label{sec:general-graph-local-mixing-time}

The above algorithm finds a $2$-approximation of the local mixing time. It can be extended to compute {\em exact} local mixing time corresponding to the given parameters $\beta$ and $\eps$. Moreover, the extended algorithm works for any general regular graph. The running time of the extended algorithm will increase by a $\mathcal{D}$ multiplicative factor than the running time of the previous $2$-approximation algorithm (Algorithm~\ref{alg:local-mixing-time}). The extended algorithm follows the same internal steps as in Algorithm~\ref{alg:local-mixing-time}, except the number of iterations. Instead of doubling the length $\ell$ in each iteration, the algorithm iterates for each value of $\ell = 1, 2, 3, \ldots$. The algorithm starts with $\ell = 1$ and the computation proceeds in iterations. In each iteration, the algorithm runs Steps~\ref{stp:bfs-tree}-\ref{stp:end-for-loop1} of Algorithm~\ref{alg:local-mixing-time}.  %\anis{to myself: fix the step number}

Now we explain how to compute the probability distribution $\pd_{\ell}$ of a random walk of length $\ell$ from the previous distribution $\pd_{\ell-1}$ in {\em one} round. We resume the deterministic flooding technique from the last step $(\ell - 1)$ with the probability distribution $\pd_{\ell-1}$ and compute $\pd_{\ell}$ in one step by flooding. In particular, starting from the distribution $\pd_{\ell-1}$, the algorithm runs Step~\ref{stp:det-flooding} of Algorithm~\ref{alg:rw-probability}. The Step~\ref{stp:det-flooding} essentially computes the probability distribution $\pd_{\ell}$ from the distribution $\pd_{\ell-1}$ in one round. 

 Since for each length $\ell = 1, 2, \dots$, the source node checks if there exists a set $S$ where the probability mixes, the algorithm will find the exact local mixing time $\tau_s(\beta, \eps)$. At the same time, the algorithm works for an arbitrary regular graph without the condition $\tau_s(\beta, \eps)\phi(S) =o(1)$ (since we are not doubling the length). The running time of the algorithm to compute exact $\tau_{s}(\beta, \eps)$ is given in the following theorem.% (the proof is easy and is deferred to the full version).

\begin{theorem}
Suppose $\tau_{s}(\beta, \eps)$ is the local mixing time w.r.t. the vertex $s$. There is an algorithm which computes  $\tau_{s}(\beta, \eps)$ with high probability and finishes in $O(\tau_{s}(\beta, \eps) \mathcal{D} \log n \log_{(1+\eps)} \beta)$ time. 
\end{theorem}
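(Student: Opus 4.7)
The plan is to follow the algorithm sketched in the paragraph immediately preceding the theorem statement: iterate $\ell = 1, 2, 3, \ldots$ rather than doubling, and for each $\ell$ run exactly the inner checks (Steps~\ref{stp:doubling-set}--\ref{stp:end-for-loop1}) of Algorithm~\ref{alg:local-mixing-time}. First I would establish correctness. Because the algorithm examines every value of $\ell$ in strictly increasing order, the first $\ell$ at which some superset $S \ni s$ with $|S| \ge n/\beta$ passes the $L_1$-norm test is, by Definition~\ref{def:loc-mix-time}, exactly $\tau_s(\beta,\eps)$ (up to the $(1+\eps)$-granularity in set sizes, which is absorbed by the $4\eps$ slack justified in Lemma~\ref{lem:doubling-set-size}). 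Since we never ``jump over'' a candidate length, no assumption of the form $\tau_s(\beta,\eps)\phi(S) = o(1)$ is needed; this is precisely what lets the extended algorithm work on an arbitrary regular graph.

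For the running time, the key observation is that the distributions $\pd_{\ell}$ can be maintained incrementally across iterations. Once $\pd_{\ell-1}$ is stored (one value per node), we obtain $\pd_{\ell}$ with a single execution of Step~\ref{stp:det-flooding} of Algorithm~\ref{alg:rw-probability}, using $p_{\ell}(u) = \sum_{v\in N(u)} p_{\ell-1}(v)/d(v)$; this costs one round per iteration and $O(\tau_s(\beta,\eps))$ rounds in aggregate. The BFS tree of depth $\mathcal{D} = \min\{\tau_s(\beta,\eps), D\}$ can be built once at the start for $O(\mathcal{D})$ rounds and reused throughout. In each iteration of $\ell$, the inner set-size loop ranges over $O(\log_{(1+\eps)}\beta)$ candidate sizes $R = n/\beta, (1+\eps)n/\beta, \ldots, n$, and for each $R$ the source finds the sum of the $R$ smallest $x_u = |p_\ell(u) - 1/R|$ values by the binary-search convergecast of Section~\ref{sec:analysis} at cost $O(\mathcal{D}\log n)$. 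Hence each iteration costs $O(\mathcal{D}\log n \log_{(1+\eps)}\beta)$ rounds for the set-size checks plus $O(1)$ for the probability update, and multiplying by the $\tau_s(\beta,\eps)$ iterations yields the claimed bound $O(\tau_s(\beta,\eps)\,\mathcal{D}\log n \log_{(1+\eps)}\beta)$.

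The delicate point that I expect to occupy most of the writing is the correctness argument in the presence of the $(1+\eps)$-granularity and the tie-breaking: we must ensure that if the true local mixing time is some $\ell^*$ and the true local mixing set has size between $(1+\eps)^i n/\beta$ and $(1+\eps)^{i+1} n/\beta$, then the $4\eps$ relaxation of Lemma~\ref{lem:doubling-set-size} guarantees detection at length $\ell^*$ on the next larger candidate size. The randomized tie-breaking (adding $r_u \in [n^{-8}, n^{-4}]$ to each $x_u$) from the proof of Theorem~\ref{thm:main-time-result} is used verbatim and is the sole source of the ``with high probability'' qualifier in the theorem; everything else in the analysis is deterministic and inherited directly from Algorithm~\ref{alg:local-mixing-time}.
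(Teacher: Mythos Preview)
Your proposal is correct and matches the paper's own proof essentially step for step: iterate $\ell=1,2,3,\ldots$, update $\pd_{\ell}$ from $\pd_{\ell-1}$ in one round via Step~\ref{stp:det-flooding}, run the inner set-size loop at cost $O(\mathcal{D}\log n\log_{(1+\eps)}\beta)$ per iteration, and multiply by $\tau_s(\beta,\eps)$ iterations. The only minor wrinkle is your claim that the BFS tree of depth $\mathcal{D}=\min\{\tau_s(\beta,\eps),D\}$ can be built ``once at the start''---since $\tau_s(\beta,\eps)$ is unknown a priori, the paper instead rebuilds (or extends) the tree to depth $\min\{\ell,D\}$ in each iteration at cost $O(\mathcal{D})$, which is absorbed into the per-iteration bound anyway.
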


\begin{proof} 
%First the source node $s$ computes a BFS tree, which takes $O(D)$ rounds. 
The algorithm iterates for each $\ell = 1, 2, \dots, \tau_{s}(\beta, \eps)$. Inside each iteration, first the source node $s$ computes a BFS tree, which takes $O(\mathcal{D})$ rounds, then it runs Step~\ref{stp:det-flooding} of Algorithm~\ref{alg:rw-probability} and then all the other steps of Algorithm~\ref{alg:local-mixing-time}. Therefore, inside one iteration, the total time taken is $(O(\mathcal{D}) + O(1) + O(\mathcal{D} \log n \log_{(1+\eps)} \beta))$, which is bounded by $O(\mathcal{D} \log n \log_{(1+\eps)} \beta)$ (cf. Theorem~\ref{thm:main-time-result}). %The BFS tree construction time $O(\mathcal{D})$ is asymptotically smaller. 
Since the number of iterations is $\tau_s(\beta, \eps)$, the time complexity of the algorithm is  $O(\tau_{s}(\beta, \eps) \mathcal{D} \log n \log_{(1+\eps)} \beta)$. Recall that $\mathcal{D}$ which is bounded above by $D$, could be much smaller than $\tau_s(\beta, \eps)$.
\end{proof}

%\vspace{-0.01in}
\section{Application to Partial Information Spreading}
\label{sec:app}
A main application of local mixing is that the local mixing time characterizes  partial information spreading.
 As mentioned in Section \ref{sec:intro}, partial information spreading has many applications including to the maximum coverage problem \cite{censor-podc10} and full information spreading \cite{censor-soda11}. 

%We can extend our algorithm to compute weak conductance of the graph. The weak conductance has application to partial information spreading, coverage problem \cite{censor-podc10} and full information spreading \cite{censor-soda11}. In the papers \cite{censor-podc10, censor-soda11}, the complexity of the algorithms are depend on the weak conductance, but it's not discussed how to compute the weak conductance of the graph. Our local mixing time can be used to compute the weak conductance of an undirected graph. While we focus on the unweighted graph, but our computation can be easily generalized to a weighted graph also. In this case, we have to consider the weighted random walk i.e., the transition probability considers the edge weights. 
%\subsection{Partial Information Spreading}
The partial information spreading problem is defined in \cite{censor-podc10} which can be considered  a relaxed version of the
well-studied (full) information spreading problem (see e.g., \cite{soda13, tcs15}). Initially each node has a token, and unlike the full information spreading (which requires to send each token to all the nodes), the requirement of  partial information spreading is to send each token to only $n/\beta$ nodes, and every node should have $n/\beta$ different tokens.  A formal definition is:

\begin{definition}\label{def:partial-info-spred}
 Each node $v\in V$ has a token $m(v)$. For a given constant $\beta \geq 1$ and for any fixed $\delta \in(0, 1)$, a $(\delta, \beta)$-partial information spreading means  that, with probability at least $1 - \delta$, each token $m(v)$ disseminates to at least $n/\beta$ nodes and every node receives at least $n/\beta$ different tokens.  
\end{definition}

To study partial information spreading, we use the well-studied (synchronous) push/pull model of communication, where each node chooses, in each round, a random neighbor to exchange information with. Note that this algorithm assumes the LOCAL model, i.e.,
in each round, there is no limit on the number of messages (tokens) that can be exchanged over an edge.
In this setting, information spreading and partial information algorithms under the push-pull model have been extensively studied (see, e.g., \cite{censor-podc10, censor-soda11} and the references therein).
 
We show that  partial information spreading in regular graphs can be accomplished 
in $\tilde{O}(\tau(\beta, \eps))$ rounds with high probability,  where $\tau(\beta, \eps)$ is the local mixing time of the graph. 
We show that it holds in the LOCAL model (where there is no congestion); as mentioned earlier, LOCAL model is typically used
in prior literature to analyze the 
push-pull mechanism \cite{censor-podc10, censor-soda11}.\footnote{In the CONGEST model, the bound will be  $\tilde{O}(\tau(\beta, \eps)  + n/\beta)$; note that $\Omega(n/\beta)$ is a time lower bound in general, since a node with degree $d$ needs at least $\Omega(\frac{n}{\beta d})$ rounds to get  $n/\beta$ different tokens.}
We compare our bound with the previous bound of $O(\frac{\log n + \log (1/\delta)}{\Phi_{\beta}(G)})$ of \cite{censor-podc10} for $(\delta, \beta)$-partial information spreading, where $\Phi_{\beta}(G)$ is the weak conductance of the graph. (Note that this bound
is also for the LOCAL model.)  This  bound is for the  ``push-pull" algorithm: in every round, each node $i$ chooses a random neighbor $j$ and exchanges  information (all their respective tokens)  with it. Note that the algorithm does not specify any termination condition (i.e., how long
show it run). To specify  that, one should know a bound on
the weak conductance (which is not known a priori). In contrast, we show that local mixing time (also) characterizes the run time of partial information spreading and our distributed computation of local mixing time (in the previous section) helps us to specify a termination condition for the push-pull mechanism.

We note that our bounds based on local mixing time are comparable to the bound based on weak conductance in many graphs (in fact, we conjecture a tight relationship between local mixing time and weak conductance, in the manner similar to the relationship
between mixing time and conductance).
However, the analysis of our bound is quite different to the one that uses weak conductance; it is simpler to analyze using random walks.
% We now show that a simple random walk-based algorithm
%does partial information spreading in the above mentioned bounds in the LOCAL and CONGEST model respectively.
We show our bound in the LOCAL model, which can be easily extended  to the CONGEST model.
%We show the bound assuming the CONGEST model, where each edge can transmit only one token per round.

%The algorithm is  as follows. Each node maintains an in-buffer which contains all the tokens that were received till now. In each round, each node, for each   (distinct) token in its in-buffer, chooses a random incident edge  and forwards it to the out-buffer (a FIFO queue) for that edge --- the token is sent (in a FIFO fashion) across the edge when there is no congestion.  Thus, in this  thus each token
%does a random walk (independent of the other tokens). Each token has a count 

\begin{theorem}
Partial information spreading in any (regular) graph can be accomplished by running the ``push-pull" algorithm for $\tilde{O}(\tau(\beta, \eps))$ rounds 
with high probability (whp), i.e., with probability at least $1 -1/n^c$ for some constant $c >0$.
\end{theorem}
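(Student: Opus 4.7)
The plan is to analyze each token separately via the random-walk interpretation of push-pull, and then take a union bound. Fix a source $s$ with local mixing set $S_s$ of size $\geq n/\beta$ and local mixing time $\tau_s = \tau_s(\beta,\eps)$. The goal is to show that the informed set $I_T(s) = \{v : v \text{ has } m(s) \text{ after } T \text{ rounds}\}$ satisfies $|I_T(s)| \ge n/\beta$ whp for $T = \tilde{O}(\tau_s)$. The dual requirement---each node $v$ receives at least $n/\beta$ distinct tokens---will follow symmetrically, applying the same argument to $v$'s own local mixing set $S_v$ and using reversibility of random walks in $d$-regular graphs; a union bound over the $n$ sources then yields both requirements simultaneously.

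Next I would use the standard random-walk coupling: in pull, node $v$ receives $m(s)$ iff the chain of random-neighbor choices traces a ``backward'' trajectory from $v$ to $s$, which in $d$-regular graphs is distributed as a forward random walk from $v$ by reversibility; in push, $m(s)$ propagates along outgoing random-walk trajectories from $s$, so $I_t(s)$ behaves like a branching walk. The key quantitative step is to convert local mixing into local expansion: since the chain on $G$ mixes inside $S_s$ within $\tau_s$ steps, a Cheeger-type argument applied to the chain restricted to $S_s$ shows that every $A \subseteq S_s$ with $|A| \le |S_s|/2$ has edge-boundary within $S_s$ of conductance at least $\tilde{\Omega}(1/\tau_s)$. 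With this local expansion in hand, I would adapt the classical push-pull rumor-spreading analysis (\`a la Chierichetti--Lattanzi--Panconesi and Giakkoupis) to $S_s$: push grows $|I_t(s) \cap S_s|$ by a constant factor every $\tilde{O}(\tau_s)$ rounds until it reaches $|S_s|/2$, and then pull completes coverage of $S_s$ in $\tilde{O}(\tau_s)$ further rounds, with standard Chernoff-style concentration amplifying the per-phase success to whp.

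The main obstacle I anticipate is that the algorithm does not know $S_s$, so pushes and pulls may cross the $S_s$-boundary and divert ``growth budget'' outside $S_s$. I plan to control this leakage via the $\eps$-closeness guarantee $\|\pd_{\tau_s}{\restriction_{S_s}} - \pmb{\pi}_{S_s}\|_1 < \eps$: at most an $\eps$-fraction of the walk's probability mass escapes $S_s$ by time $\tau_s$, so only a bounded fraction of the push-pull activity is wasted and the in-$S_s$ doubling argument survives up to constant factors. The final $\tilde{O}(\tau(\beta,\eps))$ round count absorbs an $O(\log n)$ factor for the $\log |S_s|$ doublings inside $S_s$ plus an additional $O(\log n)$ amplification factor for the whp guarantee across the $n$-many source-vertex union bounds.
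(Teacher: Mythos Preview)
Your proposal has a genuine gap at the ``Cheeger-type argument'' step. You assert that because the random walk on $G$ starting from $s$ locally mixes over $S_s$ within $\tau_s$ steps, the \emph{internal} conductance of $S_s$ (edge-boundary within $S_s$) must be $\tilde{\Omega}(1/\tau_s)$. But the local mixing condition $\|\pd_{\tau_s}{\restriction_{S_s}} - \pmb{\pi}_{S_s}\|_1 < \eps$ is a statement about the walk on the \emph{whole} graph $G$, not about any chain ``restricted to $S_s$''. The walk may achieve near-uniformity on $S_s$ by routing through $V\setminus S_s$; the definition places no constraint at all on the edge structure inside $S_s$. Concretely, take $S_s = A \cup B$ where $A,B$ are two equal cliques with \emph{no} edges between them, each attached to a large external expander: the walk from $s\in A$ can equilibrate over $A\cup B$ in $O(\log n)$ steps via the expander, yet the internal conductance of the cut $(A,B)$ inside $S_s$ is zero. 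So the Cheeger step, and hence the Chierichetti--Lattanzi--Panconesi/Giakkoupis style doubling analysis you build on it, does not go through.

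A second, related issue is that your argument tracks $|I_t(s)\cap S_s|$ and implicitly treats $S_s$ as the arena for all subsequent spreading. But the local mixing guarantee is only for the source $s$; once the token sits at some intermediate node $u$, the relevant set is $u$'s own local mixing set $S_u$, which need not equal (or even overlap much with) $S_s$. The paper's proof exploits exactly this: it never invokes conductance, but proceeds in phases of length $\tau(\beta,\eps)=\max_v \tau_v$, interpreting push directly as a random walk. In each phase, every currently informed node $v$ sends its token along a $\tau$-step walk that lands (nearly) uniformly in $v$'s own set $S_v$ of size $\ge n/\beta$; a balls-in-bins/coupon-collector count then shows the number of informed nodes essentially doubles per phase, so $O(\log n)$ phases suffice. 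The reverse direction uses reversibility in regular graphs, as you do. The conceptual difference is that the paper leverages the local mixing definition \emph{directly} via random walks and lets the relevant set change from node to node, rather than trying to extract an internal expansion property of a single fixed $S_s$.
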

\begin{proof} (sketch)
%Since each node starts with a token, it can be shown that there is no real congestion: with high probability, at most $O(\log n)$ tokens pass through each edge (e.g., this is shown in \ref{jacm-rw}).
%The algorithm is  as follows. Each node maintains in-buffer which contains all the tokens that were received in the previous round. In each round, each node, for each   (distinct) token in its in-buffer, chooses a random neighbor and forwards it to the out-buffer for that edge (the token is sent across the ; thus each token
%does a random walk (independent of the other tokens). Since each node starts with a token, it can be shown that there is no real congestion: with high probability, at most $O(\log n)$ tokens pass through each edge (e.g., this is shown in \cite{drw-jacm}).
First, we show that every token is disseminated to at least $n/\beta$ nodes whp. 
To analyze the performance of push-pull, it is enough to focus on a single message (token) and bounding the time taken
for the message to reach at least $n/\beta$ nodes whp. Then, using union bound, it will follow that every token reaches at least $n/\beta$ nodes whp.

Fix a token $a$. Let the token be initially at node $v_1$. 
 The analysis proceeds in phases with each phase
consisting of $\tau(\beta, \eps)$  (i.e., equal to the local mixing time) rounds.  Since the token $a$ propagates
by push-pull, this is equivalent to the token performing a random walk starting from node $v_1$ (in each round, a random neighbor 
is chosen and token is sent to that neighbor).
Thus, in every phase, a random walk of length $\tau(\beta, \eps)$ is performed. The first phase starts by performing a random walk of length $\tau(\beta, \eps)$ from $v_1$.  Suppose the random walk ends at $u_1$ (can be same as $v_1$, but most likely different, as will be seen). Then in the second phase, there are two source nodes, namely $v_1$ and $u_1$ (we even ignore the dissemination of token $a$ by the intermediate nodes on the random walk path). Suppose the two random walks from $v_1$ and $u_1$, end at (potentially different)  nodes  $v_2$ and $u_2$. Then in the next phase, there are potentially four source nodes (namely,  $v_1$, $u_1$, $v_2$, $u_2$) from where random walk of length $\tau(\beta, \eps)$ each are performed in parallel.

We claim that after $O(\log n)$ phases, the token has reached at least $n/\beta$ nodes. This can be shown by using a standard coupon collector argument. Let the local mixing set (cf. Definition \ref{def:loc-mix-time}) corresponding to the source node $v_1$ is $S_1$. Then, by the definition of the local mixing time, the size of $S_1$ is at least $n/\beta$. Thus, in the first phase, token $a$ ends up (almost) uniformly at random in the set $S_1$. If the token ends up at $u_1$, and let $S_2$ be the local mixing set corresponding
to the source node $u_1$ ($S_1$ and $S_2$ might be the same, it does not matter). At the end of the second phase, the  random walks
starting from $v_1$ and $u_1$ will end up uniformly at random in the respective sets $S_1$ and $S_2$ (note that
the local mixing time of the graph, $\tau(\beta, \eps)$  is the maximum among the local mixing times with respect to the all the 
$n$ different source nodes). Hence, by a coupon collector argument, it follows that after $O(\tau(\beta, \eps) \log n)$ rounds
token $a$ would have reached at least $n/\beta$ nodes whp. By union bound, the same applies to all tokens whp.

To argue that each node receives at least $n/\beta$ tokens (whp), we use the fact that in a regular graph, the random walk
is symmetric (and reversible) and hence one can apply the argument in reverse and show that (whp) at least $n/\beta$ tokens
is received by any fixed node whp;  an union bound shows the same for all nodes.
\end{proof}
%Check out the proof technique of Shah's paper, in particular Lemma~4 there. 

%\noindent \red{NOTE:} All other results below are in fact application of partial information spreading. Hence, once we have the partial information spreading, then there is nothing much to do with the following results. They will follow directly from the paper \cite{censor-podc10, censor-soda11}. One more application we could try to show is the clustering (or local clustering), by Thomas Suerwald.   
%subsection{Full Information Spreading and Leader Election: see their soda'11 paper}

%%% In the End %%%
%\vspace{-0.05in}
\section{Conclusion}
%\vspace{-0.05in}
We introduced the  notion of local mixing time and presented an efficient random-walk based distributed algorithm with provable guarantees to compute the same in undirected (regular) graphs. Our algorithm is simple and lightweight, and estimates the local mixing time with high accuracy. We showed that local mixing time can be used to characterize partial information spreading (which has many applications) and, indeed, it can be used as a termination condition in the push-pull based partial information spreading algorithm. 

Several open problems arise from our work. One key problem is whether it is possible to compute the local mixing time efficiently,
i.e., in $\tilde{O}(\tau(\beta, \eps))$ rounds in arbitrary graphs. Finding a relationship between local mixing time and weak conductance is another key problem.

%\input{node-transitive-algo}

%\bibliographystyle{IEEEtranS}
%\bibliography{IEEEabrv,Distributed-RW}
\bibliographystyle{abbrv}
\bibliography{Distributed-RW.bib}

\begin{thebibliography}{10}

\bibitem{storage-spaa13}
J.~Augustine, A.~R. Molla, E.~Morsy, G.~Pandurangan, P.~Robinson, and E.~Upfal.
\newblock Search and storage in dynamic peer-to-peer networks.
\newblock In {\em Proc. of 25th ACM Symposium on Parallelism in Algorithms and
  Architectures (SPAA)}, pages 53--62, 2013.

\bibitem{APR-podc13}
J.~Augustine, G.~Pandurangan, and P.~Robinson.
\newblock Fast byzantine agreement in dynamic networks.
\newblock In {\em Proc. of 32nd Annual ACM SIGACT-SIGOPS Symposium on
  Principles of Distributed Computing (PODC)}, pages 74--83, 2013.

\bibitem{AFPP-soda12}
V.~Auletta, D.~Ferraioli, F.~Pasquale, and G.~Persiano.
\newblock Metastability of logit dynamics for coordination games.
\newblock In {\em Proc of the 23rd Annual {ACM-SIAM} Symposium on Discrete
  Algorithms (SODA)}, pages 1006--1024, 2012.

\bibitem{censor-podc10}
K.~Censor{-}Hillel and H.~Shachnai.
\newblock Partial information spreading with application to distributed maximum
  coverage.
\newblock In {\em Proc. of the 29th Annual {ACM} Symposium on Principles of
  Distributed Computing (PODC)}, pages 161--170, 2010.

\bibitem{censor-soda11}
K.~Censor{-}Hillel and H.~Shachnai.
\newblock Fast information spreading in graphs with large weak conductance.
\newblock In {\em Proc of 22nd Annual {ACM-SIAM} Symposium on Discrete
  Algorithms (SODA)}, pages 440--448, 2011.

\bibitem{DasSarmaGP09}
A.~{Das Sarma}, S.~Gollapudi, and R.~Panigrahy.
\newblock Sparse cut projections in graph streams.
\newblock In {\em Proc. of 17th Annual European Symposium on Algorithms (ESA)},
  pages 480--491, 2009.

\bibitem{tcs15}
A.~{Das Sarma}, A.~R. Molla, and G.~Pandurangan.
\newblock Distributed computation in dynamic networks via random walks.
\newblock {\em Theor. Comput. Sci.}, 581:45--66, 2015.

\bibitem{SarmaMP15}
A.~{Das Sarma}, A.~R. Molla, and G.~Pandurangan.
\newblock Distributed computation of sparse cuts via random walks.
\newblock In {\em Proc. of 16th International Conference on Distributed
  Computing and Networking (ICDCN)}, pages 6:1--6:10, 2015.

\bibitem{DasSarmaMPU15}
A.~{Das Sarma}, A.~R. Molla, G.~Pandurangan, and E.~Upfal.
\newblock Fast distributed pagerank computation.
\newblock {\em Theor. Comput. Sci.}, 561:113--121, 2015.

\bibitem{drw-jacm}
A.~{Das Sarma}, D.~Nanongkai, G.~Pandurangan, and P.~Tetali.
\newblock Distributed random walks.
\newblock {\em J. ACM}, 60(1):2, 2013.

\bibitem{DM15}
P.~Diaconis and L.~Miclo.
\newblock On quantitative convergence to quasi-stationarity.
\newblock {\em Ann. Fac. Sci. Toulouse Math.Sér. 6}, 24:973--1016, 2015.

\bibitem{soda13}
C.~Dutta, G.~Pandurangan, R.~Rajaraman, Z.~Sun, and E.~Viola.
\newblock On the complexity of information spreading in dynamic networks.
\newblock In {\em Proceedings of the Twenty-Fourth Annual {ACM-SIAM} Symposium
  on Discrete Algorithms, {SODA}}, pages 717--736, 2013.

\bibitem{mihail}
C.~Gkantsidis, M.~Mihail, and A.~Saberi.
\newblock Throughput and congestion in power-law graphs.
\newblock In {\em SIGMETRICS}, pages 148--159, 2003.

\bibitem{JS89}
M.~Jerrum and A.~Sinclair.
\newblock Approximating the permanent.
\newblock {\em SIAM Journal of Computing}, 18(6):1149--1178, 1989.

\bibitem{kempe}
D.~Kempe and F.~McSherry.
\newblock A decentralized algorithm for spectral analysis.
\newblock {\em Journal of Computer and System Sciences}, 74(1):70--83, 2008.

\bibitem{KM15}
F.~Kuhn and A.~R. Molla.
\newblock {Distributed Sparse Cut Approximation}.
\newblock In {\em Proc. of 19th International Conference on Principles of
  Distributed Systems (OPODIS)}, pages 10:1--10:14, 2015.

\bibitem{Levin}
D.~A. Levin, Y.~Peres, and E.~L. Wilmer.
\newblock {\em Markov Chains and Mixing times}.
\newblock American Mathematical Society, Providence, RI, USA, 2008.

\bibitem{icdcn17}
A.~R. Molla and G.~Pandurangan.
\newblock Distributed computation of mixing time.
\newblock In {\em Proc. of 18th International Conference on Distributed
  Computing and Networking (ICDCN)}, page~5, 2017.

\bibitem{sirocco14}
G.~Pandurangan.
\newblock Distributed algorithmic foundations of dynamic networks.
\newblock In {\em Proc. of 21st International Colloquium on Structural
  Information and Communication Complexity (SIROCCO)}, pages 18--22, 2014.

\bibitem{gopal-book}
G.~Pandurangan.
\newblock {\em Distributed Network Algorithms}.
\newblock https://sites.google.com/site/gopalpandurangan/dna, 2016.

\bibitem{peleg}
D.~Peleg.
\newblock {\em Distributed computing: a locality-sensitive approach}.
\newblock SIAM, Philadelphia, PA, USA, 2000.

\bibitem{SpielmanT04}
D.~A. Spielman and S.~Teng.
\newblock Nearly-linear time algorithms for graph partitioning, graph
  sparsification, and solving linear systems.
\newblock In {\em Proc. of 36th ACM Symposium on Theory of Computing (STOC)},
  pages 81--90, 2004.

\end{thebibliography}

\end{document}